\def\ps@headings{%
\def\@oddhead{\mbox{}\scriptsize\rightmark \hfil \thepage}%
\def\@evenhead{\scriptsize\thepage \hfil \leftmark\mbox{}}%
\def\@oddfoot{}%
\def\@evenfoot{}}
\newtheorem{lemma}{Lemma}
\newtheorem{theorem}{Theorem}
\newtheorem{definition}{Definition}
\newtheorem{remark}{Remark}
\title{Stability Analysis of Frame Slotted Aloha Protocol}
\author{Jihong Yu, Lin Chen \\
LRI-CNRS UMR 8623, Univ. Paris-Sud, 91405 Orsay, France. \{jihong.yu, chen\}@lri.fr}
\begin{document}

\maketitle
\begin{abstract}
Frame Slotted Aloha (FSA) protocol has been widely applied in Radio Frequency Identification (RFID) systems as the \textit{de facto} standard in tag identification. However, very limited work has been done on the stability of FSA despite its fundamental importance both on the theoretical characterisation of FSA performance and its effective operation in practical systems. In order to bridge this gap, we devote this paper to investigating the stability properties of FSA by focusing on two physical layer models of practical importance, the models with single packet reception and multipacket reception capabilities. Technically, we model the FSA system backlog as a Markov chain with its states being backlog size at the beginning of each frame. The objective is to analyze the ergodicity of the Markov chain and demonstrate its properties in different regions, particularly the instability region. By employing drift analysis, we obtain the closed-form conditions for the stability of FSA and show that the stability region is maximised when the frame length equals the backlog size in the single packet reception model and when the ratio of the backlog size to frame length equals in order of magnitude the maximum multipacket reception capacity in the multipacket reception model. Furthermore, to characterise system behavior in the instability region, we mathematically demonstrate the existence of transience of the backlog Markov chain.
\end{abstract}

\begin{keywords}
Frame slotted Aloha, stability, multipacket reception.
\end{keywords}

\section{Introduction}

\subsection{Context and Motivation}

Since the introduction of Aloha protocol in 1970~\cite{Abramson1970}, a variety of such protocols have been proposed to improve its performance, such as Slotted Aloha (SA)~\cite{Lawrence1975} and Frame Slotted Aloha (FSA)~\cite{Okada1977}. SA is a well known random access scheme where the time of the channel is divided into identical slots of duration equal to the packet transmission time and the users contend to access the server with a predefined slot-access probability. As a variant of SA, FSA divides time-slots into \textit{frames} and a user is allowed to transmit only a single packet per frame in a randomly chosen time-slot.

Due to the effectiveness to tackle collisions in wireless networks, Aloha-based protocols have been applied extensively to various networked systems ranging from the traditional satellite networks~\cite{Okada1977}, wireless LANs \cite{Vasudevan2009} to the emerging Machine-to-Machine (M2M) communications~\cite{Wu2013}. Specifically, in radio frequency identification (RFID) systems, FSA plays a fundamental role in the identifications of tags~\cite{Zhu2010,Zhu2011} and is standardized in EPCGlobal Class-1 Generation-2 (C1G2) RFID standard~\cite{EPCGlobal2005}. In RFID systems, all tags transmit in the first frame in the selected slot respectively, but only tags experiencing no collisions are identified while other nodes, referred to as backlogged nodes (or simply backlogs), retransmit in the subsequent frames until all of them are successfully identified.

It is evident that stability is of primary importance for systems operating on top of Aloha-like protocols. A large body of studies have been devoted to stability analysis in a slotted collision channel \cite{Szpankowski1983,Rosenkrantz1983,Pountourakis1992} where a transmission is successful if and only if just a single user transmits in the selected slot, referred to as single packet reception (SPR).

More recently, application of FSA in RFID systems has received considerable research attention. However, very limited work has been done on the stability of FSA despite its fundamental
importance both on the theoretical characterisation of FSA performance and its effective operation in practical systems. Moreover, the emerging multipacket reception (MPR) technologies in wireless networks, such as Code Division Multiple Access (CDMA) and Multiple-Input and Multiple-Output (MIMO), make it possible to receive multiple packets in a time-slot simultaneously, which remarkably boosts system performance. Motivated by the above observation, we argue that a systematic study on the stability properties of FSA incorporating the MPR capability is called for in order to lay the theoretical foundations for the design and optimization of FSA-based communication systems.

\subsection{Summary of Contributions}

In this paper, we investigate the stability properties of FSA with single packet reception and multipacket reception capabilities. The main contributions of this paper are articulated as follows:
\begin{itemize}
\item  Firstly, we model the packet transmission process in a frame as the bins and balls problem~\cite{johnson1977} and
       derive the number of successful received packets under both SPR and MPR models.
\item  Secondly, we formulate a homogeneous Markov chain to characterize the number of the backlogged packets and derive the one-step transition probability.
\item  Thirdly, by employing drift analysis, we obtain the closed-form conditions for the stability of FSA and derive conditions maximising the stability regions for both SPR and MPR models.
\item  Finally, to characterise system behavior in the instability region, we mathematically demonstrate the existence of transience of the backlog Markov chain.
\end{itemize}

Our work demonstrates that the stability region is maximised when the frame length equals the backlog size in the MPR model and when the ratio of the backlog size to frame length equals in order of magnitude the maximum multipacket reception capacity in the MPR model. In addition, it is also shown that FSA-MPR outperforms FSA-SPR remarkably in term of the stability region size.

\subsection{Paper Organisation}
The remainder of the paper is organised as follows.
Section \ref{sec:related_work} gives a brief overview of related work and compares our results with existing results.
In Section \ref{sec:SM}, we present the system model, including random access model, traffic model and packet success probability.
In Section \ref{sec:MR}, we summary the main result of this paper. Then the detailed proofs on the stability properties of FSA-SPR and FSA-MPR are given in
Section \ref{sec:Sta-SPR} and
Section \ref{sec:Sta-MPR}, respectively.
Finally, we conclude our paper in Section~\ref{sec:conclusion}.

\section{Related Work}
\label{sec:related_work}

Aloha-based protocols are basic schemes for random medium access and are applied extensively in many communication systems. As a central property, the stability of Aloha protocols has received a lot of research attention, which we briefly review in this section.

\textbf{Stability of slotted Aloha} Tsybakov and Mikhailov \cite{Tsybakov1979} initiated the stability analysis of finite-user slotted Aloha in 1979. They found sufficient conditions for stability of the queues in the system using the principle of stochastic dominance and derived the stability region for two users explicitly. For the case of more than two users, the inner bounds to the stability region were shown in \cite{Rao1988}. Szpankowski \cite{Szpankowski1994} found necessary and sufficient conditions for the stability of queues for a fixed transmission probability vector assuming the arrival rates follow Bernoulli process. However, the derived conditions are not closed-form, meaning that verifying them is difficult. Borst~\textit{et al.} extended the results of~\cite{Szpankowski1994} to more general systems assuming the monotonicity of the service rate~\cite{Borst2008}. We would like to point out that all the above stability analysis results were derived for the SPR model.

\textbf{Stability of slotted Aloha with MPR} The first attempt at analyzing stability properties of SA with MPR was made by Ghez~\textit{et al.} in \cite{Ghez1988,Ghez1989} in an infinite-user single-buffer model. They drew a conclusion that the system could be stabilized under the symmetrical MPR model with a non-zero probability that all packets were transmitted successfully. Sant and Sharma \cite{Sant2000} studied a special case of the symmetrical MPR model for finite-user with an infinite buffer case. They derived sufficient conditions on arrival rate for stability of the system under the stationary ergodic arrival process. Although the work aforementioned analyzed the stability of system without MPR or/and with MPR, they are mostly, if not all, focused on SA protocol, while our focus is FSA with both SPR and MPR.

\textbf{Performance analysis of FSA} There exist several studies on the performance of FSA. Schoute \cite{Schoute1983} investigated dynamic FSA and obtained the expected number of time-slots needed until the backlog becomes zero. The optimal frame setting for dynamic FSA was proved mathematically by Luca and Flaminio \cite{Barletta2012}. Wieselthier and Anthony \cite{Wieselthier1989} introduced an combinational technique to analyse performance of FSA-MPR for the case of finite users. However, these works did not address the stability of FSA, which is of fundamental importance.

In summary, only very limited work has been done on the stability of FSA despite its fundamental
importance both on the theoretical characterisation of FSA performance and its effective operation in practical systems. In order to bridge this gap, we devote this paper to investigating
the stability properties of FSA under both SPR and MPR models.

\section{System Model}
\label{sec:SM}

In this section, we introduce our system model which will be used throughout the rest of this paper.

\subsection{Random access model in FSA}

We consider a system of infinite identical users operating on one frequency channel. In one slot, a node can complete a packet transmission. FSA organises time-slots with each frame containing $L$ consecutive time-slots. Each user is allowed to randomly and independently choose a time-slot to send his packet at most once per frame.

A packet suffers a collision if more than one packet is transmitted simultaneously in the same time-slot with SPR and if more than $M$ packets are transmitted simultaneously with MPR, where $M$ quantifies the MPR capacity.
In our analysis, we do not distinguish packets, i.e., newly generated packets are simply regarded and treated as backlogs.

\subsection{Traffic model}

Let $N_i$ denote the total number of new packets arriving during frame $ i $ and $A_{il}$ denote the number of new packets arriving during time-slot $l$ in frame $i$ where $l={1,2,\cdots,L}$. Assume that $(A_{il})$ are independent and identically distributed random variables with probability distribution:
\begin{equation}
P\{A_{il} = k\} = \Lambda_k    (k \geq 0)
\end{equation}
such that the expected number of arrivals per time-slot $\Lambda = \sum_{1}^{\infty}{k \Lambda_k}$ is finite.

Then as  $N_i = \sum_{l=1}^{L} {A_{il}}$, the distribution of $N_i$, denoted as $\{\lambda_n\}_{n \ge 0}$, is
\begin{align}
\lambda_n
= P\{N_i = n\} = P\left\{\sum_{l=1}^{L} {A_{il} = n}\right\}
= \sum_{k_1} \sum_{k_2} \cdots \sum_{k_L} {\prod_{l}\Lambda_{k_l}},
\end{align}
where $\sum_{l=1}^L {k_l} = n$. The expected number of arrivals during a frame, denoted as $\lambda$, is $\lambda = L \Lambda$.
%

\subsection{Packet success probability}

The process of randomly and independently choosing a time-slot in a frame to transmit packets can be cast into a class of problems that are known as occupance problems, or bins and balls problem~\cite{johnson1977}. Specifically, consider the setting where a number of balls are randomly and independently placed into a number of bins, the classic occupance problem studies the maximum load of an individual bin.

In our context, time-slots and packets to be transmitted in a frame can be cast into bins and balls, respectively. Given $h$ packets being sent in frame $i$ and the frame length $L$, the number $x$ of packets sent in one time-slot, referred as to occupancy number, is binomially distributed with parameters $h$ and $\frac{1}{L}$:
\begin{equation}
\label{eq:bino}
B_{h,\frac{1}{L}} (x)
= \binom {h}{x} (\frac{1}{L})^x (1 - \frac{1}{L})^{h-x}.
\end{equation}
Applying the distribution of equation~\eqref{eq:bino} to all $L$ slots in the frame, we can get the expected value $b(x)$ of the number of time-slots with occupance number $x$ in a frame as follows:
\begin{align}
b(x)= L \cdot B_{h,\frac{1}{L}} (x) = L \binom {h}{x} (\frac{1}{L})^x (1 - \frac{1}{L})^{h-x}.
\end{align}

We further derive the probability that a packet is transmitted successfully under both SPR and MPR.

\noindent\textbf{Packet success probability of FSA with SPR}

with SPR, the number of successfully received packets equals that of time-slots with occupance number $x=1$. Extending the result of \cite{Vogt2002}, we can derive the probability $\xi_{hk}$ that there exists exactly $k$ successful time-slots in the frame as follows:
\begin{eqnarray}
\xi_{hk} =
\begin{cases}
\frac {\binom{L}{k} \binom{h}{k} k! {G(L-k,h-k)}} {L^h}, & 0 < k < \min(h,L) \\
\frac{\binom{L}{h} h!}{L^h}, & k=h \leq L \\
0, & k > \min(h,L) \\
0, & k = L < h
\end{cases}
\end{eqnarray}
where
\begin{align*}
G(V,u) =
         V^u & +  \sum_{t=1}^{u}{(-1)^{t}} \prod_{j=0}^{t-1} [(u-j)(V-j)] (V-t)^{u-t} \frac{1}{t!},
\end{align*}
with $V\triangleq L-k$ and $u\triangleq h-k$.

Consequently, the expected number of successfully received packets in the frame, denoted as $r_h$, is
\begin{equation*}
r_h
= \sum_{k=1}^{h} {k \xi_{hk}}
= b(1).
\end{equation*}

\noindent\textbf{Packet success probability of FSA with MPR}

With MPR, the number of successfully received packets in a frame is the sum of packets in time-slots with occupance number $1 \leq x \leq M$. Here, we provide the formulation of the probability $\xi_{hk}$ that exact $k$ packets are received successfully among $h$ transmitted packets in the frame. Let occupancy numbers $x_l$ stands for the number of packets in the $l$th time-slot, $l=1,2,\cdots,L$. Every $L$-tuple of integers satisfying
\begin{equation}
\label{eq:configuration}
\sum_{l=1}^{L} x_l = h
\end{equation}
describes a possible configuration of occupancy numbers. Let $Z$ be the set of all possible configurations of occupancy numbers. Denote by $Z_a \in Z$ the configuration with the occupancy numbers ${x_1}^{Z_a},\cdots,{x_L}^{Z_a}$ satisfying \eqref{eq:configuration} and the following conditions:
\begin{eqnarray*}
\begin{cases}
\sum_{l=1}^{L} {{x_l}^{Z_a} Y_l} = k, \\
P(Z_a) = \frac{h!}{ {{x_1}^{Z_a}}! {{x_2}^{Z_a}}! \cdots {{x_L}^{Z_a}}! } {L^{-h}},
\end{cases}
\end{eqnarray*}
where $Y_l$ denotes the indicator function satisfying
\begin{eqnarray*}
Y_l=
\begin{cases}
1& 1 \leq {x_l}^{Z_a} \leq M,\\
0& \text{otherwise}.
\end{cases}
\end{eqnarray*}

Therefore, the probability $\xi_{hk}$ can be written as follows:
\begin{equation}
\label{eq:Suc_pro_MPR}
\xi_{hk} =
          \sum_{Z_a \in Z} P(Z_a)
\end{equation}

Consequently, we can derive the expected number of successfully received packets in the frame as
\begin{align}
\label{eq:Suc_MPR}
r_h= \sum_{k=1}^{h} {k \xi_{hk}} = L \sum_{x=1}^{M} x \binom {h}{x} (\frac{1}{L})^x (1 - \frac{1}{L})^{h-x}.
\end{align}

\section{Main results}
\label{sec:MR}

To streamline the presentation, we summarize the main results in this section and give the detailed proof and analysis in the sections that follow.

Aiming at studying the stability of FSA, we decompose our global objective into the following three questions, all of which are of fundamental
importance both on the theoretical characterisation of FSA performance and its effective operation in practical systems:
\begin{itemize}
\item \textbf{Q1}: Under what condition(s) is FSA stable?
\item \textbf{Q2}: When is the stability region maximised?
\item \textbf{Q3}: How does FSA behave in the instability region?
\end{itemize}

Before answering the questions, we first introduce the formal definition of stability employed by Ghez \textit{et al.} in~\cite{Ghez1988}.

Define by a variable $X_i$ the number of backlogged packets in the system at the start of frame $i$. The discrete-time process $(X_i)_{i \geq 0}$ can be seen as a homogeneous Markov chain.

\begin{definition}
\label{def:sta}
An FSA system is stable if $(X_i)_{i \geq 0}$ is ergodic and unstable otherwise.
\end{definition}

By Definition~\ref{def:sta}, we can transform the study of stability of FSA into investigating the ergodicity of the backlog Markov chain. The rationality of this transformation is two-fold. One interpretation is the property of ergodicity that there exists an unique stationary distribution of a Markov chain if it is ergodic. The other can be interpreted from the nature of ergodicity that each state of the Markov chain can recur in finite time with probability $1$.

We then establish the following results characterizing the stability region and demonstrating the behavior of the backlog Markov chain in nonergodicity regions under both SPR and MPR. For notation convenience, we use FSA-SPR and FSA-MPR to denote the FSA system operating with SPR and MPR, respectively.

\subsection{Results for FSA-SPR}

\begin{theorem}
\label{theorem:stable}
Under FSA-SPR, consider an irreducible and aperiodic backlog Markov chain $(X_i)_{i \geq 0}$ with nonnegative integers. Denote by $h$ the number of backlogged packets in frame $i$ and $\alpha \triangleq \frac{h}{L}$, for $h \to \infty$, we have
\footnote{For two variables X, Y, we use the following asymptotic notations:
\begin{itemize}
\item $X = o(Y)$ if $\frac{X}{Y} = 0$, as $Y \to  \infty$;
\item $X = O(Y)$ if $\frac{X}{Y} = \infty$, as $Y \to  \infty$;
\item $X = \Theta(Y)$ if $\theta_1 \leq \frac{X}{Y} \leq \theta_2$, as $Y \to  \infty$, where $\theta_2 \geq \theta_1 >0$.
\end{itemize}}
\begin{enumerate}\setlength{\parskip} {4pt}
\item The system is always stable for all arrival distributions if $\Lambda < \alpha e^{-\alpha}$ and $L = \Theta(h)$. Specially, the stability region is maximised when $\alpha = 1$.
\item The system is unstable for all arrival distributions under each of the following three conditions: (a) $L = o(h)$; (b) $L = O(h)$; (c) $L = \Theta(h)$ and $\Lambda > \alpha e^{-\alpha}$.
\end{enumerate}
\end{theorem}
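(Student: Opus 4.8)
The plan is to reduce both parts to the sign of the one-step mean drift of the backlog chain and then invoke standard Lyapunov-type criteria. Writing $S_i$ for the number of packets successfully received in frame $i$, and recalling that exactly the $h=X_i$ backlogged packets contend while the $N_i$ fresh arrivals only join the backlog for frame $i+1$, the chain obeys $X_{i+1}=X_i-S_i+N_i$. Taking conditional expectations and using $E[S_i\mid X_i=h]=r_h=b(1)=h\bigl(1-\tfrac1L\bigr)^{h-1}$ together with $E[N_i]=\lambda=L\Lambda$, the drift is
\begin{equation*}
\Delta(h)\triangleq E[X_{i+1}-X_i\mid X_i=h]=L\Lambda-h\Bigl(1-\tfrac{1}{L}\Bigr)^{h-1}.
\end{equation*}
The first technical step is the asymptotic evaluation of $r_h$: substituting $\alpha=h/L$ and using $(1-\tfrac1L)^{L}\to e^{-1}$ I would show $(1-\tfrac1L)^{h-1}\to e^{-\alpha}$ when $\alpha$ is held fixed, whence $r_h=\alpha L\,e^{-\alpha}(1+o(1))$ and therefore $\Delta(h)=L\bigl(\Lambda-\alpha e^{-\alpha}\bigr)(1+o(1))$ as $h\to\infty$.

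For part~1 I would apply Foster's criterion with the linear test function $V(h)=h$. When $L=\Theta(h)$ the ratio $\alpha$ stays bounded away from $0$ and $\infty$, so if $\Lambda<\alpha e^{-\alpha}$ the leading factor $\Lambda-\alpha e^{-\alpha}$ is a strictly negative constant and $\Delta(h)\to-\infty$; in particular $\Delta(h)\le-\epsilon<0$ for all $h$ outside a finite set, while $\Delta(h)$ itself is finite because $\Lambda<\infty$. As irreducibility and aperiodicity are assumed, Foster's criterion then yields positive recurrence, i.e.\ ergodicity, for \emph{every} arrival law with the prescribed mean, since only the mean enters the drift. The maximisation claim reduces to a one-variable exercise: the threshold $\alpha e^{-\alpha}$ has derivative $e^{-\alpha}(1-\alpha)$, which vanishes at $\alpha=1$, so the admissible interval $\Lambda\in[0,\alpha e^{-\alpha})$ is widest, namely $[0,e^{-1})$, precisely when $L=h$.

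For part~2 the strategy is to show each regime forces a strictly positive drift and then to establish instability. In the notation of the footnote, (a) $L=o(h)$ gives $\alpha\to\infty$ and (b) $L=O(h)$ gives $\alpha\to0$; in both cases $\alpha e^{-\alpha}\to0$, so the expected number of successes $r_h$ is of smaller order than the per-frame arrival count $L\Lambda$, giving $\Delta(h)\sim L\Lambda>0$. In (c) the ratio $\alpha$ is a fixed constant with $\Lambda>\alpha e^{-\alpha}$, so $\Delta(h)\sim L(\Lambda-\alpha e^{-\alpha})>0$ as well. To convert positive drift into transience I would seek a bounded supermartingale of the form $V(h)=z^{h}$ with $z\in(0,1)$: since $N_i$ and $S_i$ are conditionally independent given $X_i=h$, $E[z^{X_{i+1}}\mid X_i=h]=z^{h}\,E\bigl[z^{N_i}\bigr]E\bigl[z^{-S_i}\bigr]$, and choosing $z$ close to $1$ makes this $\le z^{h}$ exactly when $\Delta(h)>0$; the supermartingale transience criterion (a bounded $V$ that is non-increasing in mean off a finite set and takes a value below its minimum on that set) then forces escape to infinity with positive probability.

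The step I expect to be the genuine obstacle is this last transience argument, for two reasons. First, unlike the textbook setting the increments are \emph{not} bounded: both $N_i$ and the downward jump $S_i\le\min(h,L)$ (whose mean $r_h$ is $\Theta(h)$ in regimes (b)--(c)) scale with $h$, so the off-the-shelf bounded-jump criteria do not apply and the generating function $E[z^{N_i-S_i}]$ must be controlled uniformly in $h$, which in turn needs at least a second-moment bound on $N_i$. Second, because we assume only $\Lambda<\infty$ (finite mean, possibly infinite higher moments), a clean $z^{h}$ supermartingale may fail for heavy-tailed arrival laws, so to retain the ``for all arrival distributions'' conclusion I would instead fall back on a drift-based non-ergodicity criterion that needs only the sign of the mean drift plus a one-sided integrability bound on $S_i$, verified separately in each regime. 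The regime (b), where $\alpha\to0$ and the downward jumps are largest, is where I anticipate the most care will be needed.
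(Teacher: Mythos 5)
Your treatment of part 1 is sound and is essentially the paper's own argument: you compute the drift $D_h=\lambda-r_h$ with $r_h=h\bigl(1-\tfrac1L\bigr)^{h-1}$, pass to the limit $L(\Lambda-\alpha e^{-\alpha})$, and invoke a Foster-type criterion; the paper does exactly this via Pakes' lemma (Lemma~\ref{lemma:pakes}), whose hypotheses --- finiteness of $|D_h|$ and $\limsup_h D_h<0$ --- are precisely what you verify, and the maximisation at $\alpha=1$ is the same one-variable calculus. You also read the paper's nonstandard asymptotic conventions correctly, so your drift signs in regimes (a)--(c) agree with the paper's.

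The gap is in part 2. First, ``unstable'' in this theorem means \emph{non-ergodic} (Definition~\ref{def:sta}); transience is a strictly stronger property that the paper establishes only later (Theorem~\ref{th:NON_oh}) and only for Poisson arrivals. Your primary route --- a bounded exponential supermartingale $V(h)=z^h$ --- aims at transience for \emph{all} arrival laws, and it fails for exactly the reasons you yourself list: with only $\Lambda<\infty$ assumed and downward jumps of size up to $\min(h,L)$ (mean $\Theta(h)$ in regimes (b)--(c)), no single $z\in(0,1)$ works uniformly in $h$; moreover the claim that taking $z$ close to $1$ gives $E[z^{N_i}]E[z^{-S_i}]\le 1$ ``exactly when $\Delta(h)>0$'' is false as stated, since positive mean drift at a given $h$ only controls the derivative of $\phi\mapsto E[\phi^{N_i-S_i}]$ at $\phi=1$, and the admissible range of $z$ can shrink to nothing as $h\to\infty$ without uniform second-moment control. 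Having (correctly) discarded this route, you fall back on ``a drift-based non-ergodicity criterion \dots verified separately in each regime,'' but that is precisely where the content of the proof lies, and you neither state the criterion nor verify its hypotheses. The paper's instrument is Kaplan's lemma (Lemma~\ref{lemma:kaplan}): besides $\lim_h D_h>0$ (which you do establish in all three regimes), one must check the condition $\phi^h-\sum_{k}P_{hk}\phi^{k}\ge -B(1-\phi)$ for $\phi\in[c,1]$, which the paper reduces, following Sennott, to a lower bound on the downward part of the drift $d_{h^-}=\sum_{k\le h}(k-h)P_{h,k}$ computed from the explicit transition probabilities~\eqref{transition Pro}. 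Some condition of this type controlling the downward jumps is indispensable: positive drift alone does not preclude ergodicity (a chain that jumps to $0$ with small probability and otherwise moves up has everywhere-positive drift yet can be positive recurrent). As it stands, your part 2 identifies the right obstruction and the right kind of fallback, but it is a plan rather than a proof.
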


\begin{remark}
Theorem~\ref{theorem:stable} answers the first two questions and can be interpreted as follows:
\begin{itemize}
\item When $L=o(h)$, i.e., the backlog size $h$ is far larger than the frame length $L$, a packet experiences collision with high probability (w.h.p.), thus increasing the backlog size and destabilising the system;
\item When $L = O(h)$, i.e., the backlog size is far smaller than the frame length, a packet is transmitted successfully w.h.p.. However, the number of successful transmitted packets is still significantly less than that of new arrivals in the frame. The system is thus unstable.
\item When $L = \Theta(h)$, i.e., the backlog size has the same order of magnitude as the frame length, the system is stable when the backlog can be reduced gradually, i.e., when the expected arrival rate is less than the successful rate.
\end{itemize}
\end{remark}

It is well known that an irreducible aperiodic Markov chain falls into one of three mutually exclusive classes: positive recurrent, null recurrent, and transient. So, our next step after deriving the stability conditions is to show whether the backlog Markov chain in the instability region is transient or recurrent, which answers the third question. To facilitate the demonstration, we focus on the Poisson arrival processes. However, our analysis can be extended to other arrival processes.

\begin{theorem}
\label{th:NON_oh}
With the same notations as in Theorem \ref{theorem:stable} under Poisson arrivals, $(X_i)_{i \geq 0}$ is always transient in the instability region, i.e., under each of the following three conditions: (1) $L = o(h^{1-\epsilon})$, $\forall \ 0< \epsilon <1$; (2) $L = \Theta(h)$ and $\Lambda > \alpha e^{- \alpha}$; (3) $L = O(h)$.
\end{theorem}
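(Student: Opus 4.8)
The plan is to prove transience through the Foster--Lyapunov transience criterion: an irreducible aperiodic chain on $\mathbb{Z}_{\ge 0}$ is transient whenever there exist a \emph{bounded} function $f:\mathbb{Z}_{\ge 0}\to[0,\infty)$ and a finite set $B$ with $E[f(X_{i+1})\mid X_i=h]\le f(h)$ for all $h\notin B$ and some $h_0\notin B$ satisfying $f(h_0)<\min_{h\in B}f(h)$. First I would record the one-frame recursion $X_{i+1}=X_i+N_i-R_i$, where $N_i\sim\mathrm{Poisson}(L\Lambda)$ counts the arrivals in frame $i$ and $R_i$ counts the successful departures. Conditioned on $X_i=h$ these are independent (new packets wait for the next frame), we have $0\le R_i\le\min(h,L)$, and from Section~\ref{sec:SM}, $E[R_i]=r_h=h(1-\tfrac1L)^{h-1}$, so the drift is $\mu(h)=L\Lambda-r_h$.

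Second, I would take the bounded Lyapunov function $f(x)=e^{-\theta x}$ with a \emph{fixed} $\theta>0$ to be chosen. Since $f$ is strictly decreasing, the separation requirement is automatic with $B=\{0,1,\dots,N\}$, and only the supermartingale inequality needs verifying. Using independence and $E[e^{-\theta N_i}]=\exp\{L\Lambda(e^{-\theta}-1)\}$, the condition $E[f(X_{i+1})\mid h]\le f(h)$ is equivalent to $E[e^{\theta R_i}\mid h]\le\exp\{L\Lambda(1-e^{-\theta})\}$.

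Third, I would establish the \emph{sub-Poisson} moment bound $E[e^{\theta R_i}\mid X_i=h]\le\exp\{(e^{\theta}-1)r_h\}$, after which the target inequality collapses, upon dividing by $e^{\theta}-1$ and using $\tfrac{1-e^{-\theta}}{e^{\theta}-1}=e^{-\theta}$, to the clean condition $r_h\le L\Lambda\,e^{-\theta}$, i.e.\ $r_h/(L\Lambda)\le e^{-\theta}$. It then suffices to check $\limsup_{h\to\infty}r_h/(L\Lambda)<1$ in each region and fix $\theta$ just below $-\log$ of that limit: in region (2) the ratio tends to $\alpha e^{-\alpha}/\Lambda<1$ by the hypothesis $\Lambda>\alpha e^{-\alpha}$; in region (3), $L=O(h)$ forces $r_h\sim h$ while $L\Lambda$ grows strictly faster, so the ratio tends to $0$; and in region (1), $\alpha=h/L\to\infty$ gives $r_h/(L\Lambda)\sim\alpha e^{-\alpha}/\Lambda\to 0$. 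A single $\theta$ thus makes $f$ superharmonic for all large $h$, and the criterion yields transience.

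The hard part is the third step's control of the departures. The naive pointwise bound $R_i\le\min(h,L)$ is far too lossy: it only gives $E[e^{\theta R_i}]\le 1+e^{\theta L}r_h$, which forces $\theta=O(1/L)\to 0$ and is incompatible with the fixed $\theta$ the criterion demands once $L\to\infty$. The sharp bound must instead exploit that $R_i=\sum_{l=1}^{L}\mathbf{1}\{N_l=1\}$ is a sum of negatively dependent occupancy indicators; the cleanest route is Poissonization---replacing the exactly $h$ balls by $\mathrm{Poisson}(h)$ balls makes the bin counts independent and $R_i$ a genuine sum of independent Bernoullis of mean $\approx r_h$, for which Poisson MGF domination is immediate---followed by a de-Poissonization estimate. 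This departure control is also precisely what pins the regime in region (1). The per-frame decrease probability obeys $P(X_{i+1}<X_i\mid h)\le P(R_i\ge1)\le r_h$, and only when $L=o(h^{1-\epsilon})$ does $r_h=h(1-\tfrac1L)^{h-1}\le h\,e^{-(h-1)/L}$ become summable in $h$ (then $(h-1)/L$ eventually exceeds $h^{\epsilon}$, so $r_h$ decays faster than any power), whereas for $L$ between $h^{1-\epsilon}$ and $h$---e.g.\ $L=h/\log h$---one gets $r_h$ bounded away from $0$ and the decrease probability fails to vanish. This is exactly why region (1) is stated with the strictly smaller range $L=o(h^{1-\epsilon})$ rather than the full instability range $L=o(h)$ of Theorem~\ref{theorem:stable}.
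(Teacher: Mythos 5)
Your overall architecture is sound and essentially parallels the paper's: the criterion you call Foster--Lyapunov is the same bounded-superharmonic-function test the paper imports as Lemma~\ref{Le: transi}, with your $f(x)=e^{-\theta x}$ playing the role of the paper's sequence $y_i=(i+1)^{-\theta}$, and the conditional independence of arrivals $N_i$ and departures $R_i$ does reduce everything neatly to the MGF inequality $E[e^{\theta R_i}\mid X_i=h]\le\exp\{L\Lambda(1-e^{-\theta})\}$. The genuine gap is your third step, the ``sub-Poisson'' bound $E[e^{\theta R_i}\mid X_i=h]\le\exp\{(e^{\theta}-1)r_h\}$, which you assert but never prove, and both justifications you offer fail. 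First, the singleton indicators $\mathbf{1}\{N_l=1\}$ are \emph{not} negatively dependent: negative association of the bin counts is inherited only by monotone functions of disjoint subsets of the counts, and $\mathbf{1}\{x=1\}$ is not monotone. Concretely, for $h=L=2$ one has $P(N_1=1,N_2=1)=\tfrac12>\tfrac14=P(N_1=1)P(N_2=1)$, a strictly positive correlation, so ``Poisson MGF domination is immediate'' is unjustified. Second, the Poissonization fallback founders exactly where you need it most: replacing $h$ balls by $\mathrm{Poisson}(h)$ balls does give independent bin counts, but de-Poissonization costs a factor $1/P(\mathrm{Poisson}(h)=h)=\Theta(\sqrt{h})$ (and $R$ is not monotone in the number of balls, so no monotone coupling is available to avoid this). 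In regions (2) and (3) the budget $\exp\{L\Lambda(1-e^{-\theta})\}=e^{\Theta(h)}$ or larger absorbs a $\sqrt{h}$ factor, but region (1) allows bounded $L$ (any constant is $o(h^{1-\epsilon})$), making the budget $O(1)$; then the de-Poissonized bound $\Theta(\sqrt{h})\exp\{(e^\theta-1)r_h\}$ exceeds it, and the supermartingale inequality is simply not established in the very regime that is the heart of the theorem. Your closing heuristic about summability of $r_h$ is not a substitute: turning it into a proof (a Borel--Cantelli argument over frames) would require controlling the random state at every step, which is a different argument you have not made.

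The missing inequality is plausibly true---for instance it would follow from sub-Poisson factorial moments $E[(R_i)_j\mid X_i=h]\le r_h^{\,j}$ via the expansion $E[e^{\theta R}]=\sum_{j\ge 0}\tfrac{(e^{\theta}-1)^j}{j!}E[(R)_j]$, where explicitly $E[(R)_j]=(L)_j(h)_jL^{-j}(1-j/L)^{h-j}$---but that is a real combinatorial estimate you would have to supply, not a known black box. It is instructive to see how the paper avoids needing any exponential-moment control of the departures: its polynomial test function $(i+1)^{-\theta}$ penalizes downward jumps only polynomially, so crude bounds suffice, namely the elementary estimates on $\xi_{hk}$ in \eqref{eq:Suc_case_2} plus the Poisson--Chernoff bound of Lemma~\ref{lemma:aux} (inequality \eqref{eq:bound_Poi}) in regions (2) and (3), while region (1) is handled by a separate reduction (Lemma~\ref{Le:sup_oh}, proved in Appendix~A) showing that $\lim_{k\to\infty}k^2\sup_{h\ge k}\xi_{hk}=0$ implies transience, verified through Feller's Poisson approximation to the occupancy distribution (Lemma~\ref{Lem:poisson}). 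Your exponential test function demands correspondingly exponential control of $R_i$, which is precisely the unproven step. If you either prove the factorial-moment estimate, or restrict the MGF route to regions (2)--(3) (where the $\sqrt{h}$ de-Poissonization loss is affordable) and treat region (1) by the paper's Poisson-approximation argument, your proof would become a clean, unified alternative; as written, its central inequality is unproven and both proposed proofs of it are broken.
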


\begin{remark}
If a state of a Markov chain is transient, then the probability of returning to itself for the first time in an finite time is less than $1$. Hence, Theorem~\ref{th:NON_oh} implies that once out of the stability region, the system is not guaranteed to return to stable state in a finite time.
\end{remark}

\subsection{Results for FSA-MPR}


\begin{theorem}
\label{theorem:stable-MPR}
Consider an FSA-MPR system where a receiver can decode at most $M$ simultaneously transmitted packets. Using the same notations as in Theorem \ref{theorem:stable}, we have
\begin{enumerate}
\item The system is always stable for all arrival distributions if $\Lambda < \sum_{x=1}^{M} e^{-\alpha} \frac{{\alpha ^x}}{(x-1)!}$ and $L = \Theta(h)$. Specially, let $\alpha^*$ denote the value of $\alpha$ that maximises the stability region, it holds that $\alpha^* =\Theta(M)$.
\item The system is unstable for all arrival distributions under each of the following conditions: (1) $L = o(h)$; (2) $L = O(h)$; (3) $L = \Theta(h)$ and $\Lambda > \sum_{x=1}^{M} e^{-\alpha} \frac{{\alpha ^x}}{(x-1)!}$.
\end{enumerate}
\end{theorem}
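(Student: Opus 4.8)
The plan is to run the same drift (Foster--Lyapunov) analysis used for FSA-SPR, with the single-slot throughput replaced by its MPR value from \eqref{eq:Suc_MPR}. Let $S_h$ be the number of packets decoded during a frame that begins with $h$ backlogged packets; since only the backlog contends in a frame and fresh arrivals are deferred to the next one, the chain evolves as $X_{i+1}=h-S_h+N_i$, giving the one-step drift
\[
d_h \triangleq E[X_{i+1}-X_i \mid X_i=h] = \lambda - r_h = L\Lambda - r_h,
\]
where $r_h=E[S_h]$ is the MPR success count of \eqref{eq:Suc_MPR}. Crucially $r_h$ depends on the arrival stream only through its mean $\Lambda$, which is exactly what will let every conclusion hold ``for all arrival distributions'' of that mean.

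First I would pin down the normalised drift. Dividing \eqref{eq:Suc_MPR} by $L$ and letting $h\to\infty$ with $\alpha=h/L$ held fixed, the binomial $B_{h,1/L}$ converges to a Poisson law of mean $\alpha$, so
\[
\frac{r_h}{L}=\sum_{x=1}^{M} x\,B_{h,\frac1L}(x)\longrightarrow \sum_{x=1}^{M} x\,e^{-\alpha}\frac{\alpha^x}{x!}=\sum_{x=1}^{M} e^{-\alpha}\frac{\alpha^x}{(x-1)!}\triangleq f(\alpha),
\]
whence $d_h/L\to\Lambda-f(\alpha)$. For Part~1, under $L=\Theta(h)$ and $\Lambda<f(\alpha)$ this yields $d_h=L(\Lambda-f(\alpha)+o(1))\to-\infty$; as $\Lambda<\infty$ makes each $d_h$ finite, Pakes' criterion gives positive recurrence, i.e.\ ergodicity.

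Next I would locate the maximiser. The stability region is the interval $\{\Lambda:\Lambda<f(\alpha)\}$, so maximising it means maximising $f(\alpha)=\sum_{x=1}^{M}x p_x$ with $p_x=e^{-\alpha}\alpha^x/x!$. Using the Poisson identity $p_x'=p_{x-1}-p_x$, I would collapse the derivative to the clean form
\[
f'(\alpha)=P(K\le M-1)-M\,P(K=M),\qquad K\sim\mathrm{Poisson}(\alpha),
\]
so $\alpha^\ast$ solves $P(K\le M-1)=M\,P(K=M)$. A Stirling/large-deviation estimate then reads off the order: for $\alpha=cM$ with fixed $c<1$ one has $M\,P(K=M)\approx\sqrt{M/2\pi}\,(ce^{1-c})^M\to0$ while $P(K\le M-1)\to1$, so $f'>0$; at $\alpha=M$ one has $M\,P(K=M)\approx\sqrt{M/2\pi}\to\infty$ while $P(K\le M-1)\to\tfrac12$, so $f'<0$. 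The sign change traps $\alpha^\ast$ between $cM$ and $M$, forcing $\alpha^\ast=\Theta(M)$.

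For Part~2 I would exhibit $\liminf_{h\to\infty}d_h>0$ in each regime and invoke the converse drift (non-ergodicity) criterion. When $L=\Theta(h)$ and $\Lambda>f(\alpha)$, the normalised limit gives $d_h\to+\infty$. When $L=o(h)$ (so $\alpha\to\infty$) every fixed-$x$ term of $r_h/L$ vanishes, hence $r_h=o(L)$ and $d_h=L\Lambda-o(L)$ stays positive and bounded away from $0$. When $L=O(h)$ (so $L/h\to\infty$) the trivial bound $r_h\le h$ already gives $d_h\ge L\Lambda-h\to+\infty$. In each case a positive drift together with integrability of the increments contradicts positive recurrence, establishing instability. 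The hard part will be the maximiser step: converting the stationarity relation $P(K\le M-1)=M\,P(K=M)$ into the sharp $\alpha^\ast=\Theta(M)$ requires uniform control of the Poisson point mass and lower tail in the regime $\alpha\asymp M$ (Stirling with explicit remainder, or a Chernoff bound), rather than the pointwise limits that suffice for the ergodic/non-ergodic dichotomy; the non-ergodicity step is secondary but still needs the increment integrability check so that positive drift genuinely rules out positive recurrence.
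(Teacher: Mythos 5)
Your overall strategy coincides with the paper's: write the drift as $D_h=\lambda-r_h$, approximate $r_h/L$ by the Poisson quantity $\Phi(\alpha)=\sum_{x=1}^{M}e^{-\alpha}\frac{\alpha^x}{(x-1)!}$ as in \eqref{eq:E_sup_MPR}, apply Pakes' criterion (Lemma~\ref{lemma:pakes}) for Part 1, and a converse drift criterion for Part 2; your drift computations in all three instability regimes are exactly the paper's. The genuine gap is in the maximiser step. Your identity $f'(\alpha)=P(K\le M-1)-M\,P(K=M)$ is correct --- it is precisely the paper's $\Phi'(\alpha)=e^{-\alpha}\bigl[\sum_{i=0}^{M-1}\frac{\alpha^i}{i!}-\frac{\alpha^M}{(M-1)!}\bigr]$ rewritten probabilistically --- but evaluating its sign at the two isolated points $\alpha=cM$ and $\alpha=M$ only produces a critical point somewhere in $(cM,M)$; it does not show the \emph{global} maximiser lies there, since nothing in your argument excludes $f'$ changing sign again on $(0,cM)$, where a competing local maximum could dominate. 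What is needed is one-sided sign control on whole intervals, which is exactly what the paper's Lemma~\ref{Le:alpha_M} achieves with the elementary bounds $N!\le N^{N-1}$ (giving $\Phi'<0$ for \emph{every} $\alpha\ge M$) and $N!\ge (N/e)^N$ (giving $\Phi'>0$ for \emph{every} $\alpha\le\frac{M-1}{e}$), whence $\alpha^*\in[\frac{M-1}{e},M]$. Your route can be completed in the same spirit: $P_\alpha(K\le M-1)$ is decreasing in $\alpha$ (its derivative telescopes to $-e^{-\alpha}\frac{\alpha^{M-1}}{(M-1)!}$) while $M\,P_\alpha(K=M)$ is increasing on $(0,M)$ (its derivative is proportional to $M-\alpha$), so your estimate at $\alpha=cM$ propagates to all $\alpha\le cM$; and for $\alpha\ge M$ one has $\frac{P_\alpha(K\le M-1)}{M\,P_\alpha(K=M)}=\frac{1}{M}\sum_{j=0}^{M-1}\prod_{i=j+1}^{M}\frac{i}{\alpha}<1$ termwise. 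You flagged the need for ``uniform control'' yourself, but flagging it is not supplying it; as written, $\alpha^*=\Theta(M)$ is unproved.

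There is a second, smaller gap in Part 2. ``Positive drift together with integrability of the increments contradicts positive recurrence'' is not a valid criterion: a chain whose drift is bounded away from zero from below can still be ergodic if it makes rare but very large downward jumps, so some uniform control of the downward part of the transition kernel is indispensable. The paper invokes Kaplan's lemma (Lemma~\ref{lemma:kaplan}) and verifies its second hypothesis by lower-bounding the downward drift, $d_{h^-}\ge -M\min(h,L)$, the MPR analogue of \eqref{eq:down_dr}, which holds because at most $M$ packets per slot, hence at most $M\min(h,L)$ packets per frame, can be decoded. Your proposal should carry out this (easy) check or an equivalent bounded-downward-jump verification; without it, positive drift alone does not deliver the non-ergodicity conclusion.
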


\begin{remark}
Comparing the results of Theorem~\ref{theorem:stable-MPR} to Theorem~\ref{theorem:stable}, we can quantify the performance gap between FSA-SPR and FSA-MPR in terms of stability. For example, when $\alpha =1$, the stability region is maximised in FSA-SPR with $\Lambda^{SPR} < e^{-1}$, while the stability region in FSA-MPR is $\Lambda^{MPR} < e^{-1} \sum_{x=1}^{M} \frac{1}{(x-1)!}$. Note that for $M>2$, it holds that
\begin{equation*}
1+1+\frac{1}{2}
< \sum_{x=1}^{M} \frac{1}{(x-1)!}<1+1+\sum_{x=1}^M \frac{1}{x(x+1)}
< 2+\left(\sum_{x=1}^M \frac{1}{x}-\frac{1}{x+1}\right)=3-\frac{1}{M}.
\end{equation*}
The maximum stability region of FSA-MPR is thus between $2.5$ and $3$ times that of FSA-SPR.
\end{remark}

\begin{theorem}
\label{th:NON_oh_OH-MPR}
With the same notations as in Theorem~\ref{theorem:stable-MPR} under Poisson arrivals,
$(X_i)_{i \geq 0}$ is transient under each of the following three conditions: (1) $L = o(h^{1 - \epsilon})$, $\forall \ 0< \epsilon <1$;
      (2) $L = \Theta(h)$ and $\Lambda > \alpha$; (3) $L = O(h)$.
\end{theorem}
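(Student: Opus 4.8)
The plan is to establish transience through the drift/Lyapunov criterion for countable chains (Fayolle--Malyshev--Menshikov): an irreducible aperiodic chain on $\mathbb{Z}_{\ge 0}$ is transient provided there exist a bounded test function $f\ge 0$ and a finite set $A$ with $E[f(X_{i+1})\mid X_i=h]\le f(h)$ for every $h\notin A$ and $f(h_0)<\min_{y\in A}f(y)$ for some $h_0\notin A$. Since the backlog obeys $X_{i+1}=X_i-R_i+N_i$, where $N_i\sim\mathrm{Poisson}(\lambda)$ with $\lambda=L\Lambda$ is the number of arrivals in frame $i$, where $R_i$ is the number of decoded packets with $E[R_i\mid X_i=h]=r_h=L\sum_{x=1}^{M}x\,B_{h,1/L}(x)$, and where the new arrivals join the backlog only for the next frame so that $N_i$ is independent of $R_i$, I would take the exponential test function $f(h)=\rho^{h}$ for a constant $\rho\in(0,1)$ chosen per regime. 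This $f$ is bounded, strictly decreasing and positive, so the second requirement holds automatically with $A=\{0,\dots,N_0\}$ for any threshold $N_0$; everything reduces to the supermartingale inequality.

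Second, I would reduce that inequality to a scalar condition. By independence, $E[f(X_{i+1})\mid X_i=h]=\rho^{h}\,E[\rho^{N_i}]\,E[\rho^{-R_i}]$, and the Poisson law gives $E[\rho^{N_i}]=e^{-\lambda(1-\rho)}$, so the supermartingale property is equivalent to $E[\rho^{-R_i}]\le e^{\lambda(1-\rho)}$. The remaining task is to upper bound the exponential moment of the decoded count. Two bounds are relevant: the crude $R_i\le h$, giving $E[\rho^{-R_i}]\le \rho^{-h}=e^{h\ln(1/\rho)}$, and a refined bound exploiting that decoded slots become rare when $\alpha=h/L$ is large. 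For the refined bound I would decouple the $L$ slot occupancies --- either by Poissonising the number of transmitting packets, so that the per-slot counts become independent $\mathrm{Poisson}(\alpha)$ variables, or equivalently by invoking the negative association of the multinomial occupancy vector --- to obtain $E[\rho^{-R_i}]\le \exp(L\beta_h)$ with $\beta_h=\sum_{x=1}^{M}\big(\rho^{-x}-1\big)B_{h,1/L}(x)$.

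Third, I would verify the scalar condition in each regime with an appropriate $\rho$. In regime (3), $L=O(h)$ forces $L\Lambda/h\to\infty$, so the crude bound suffices: fixing any $\rho\in(0,1)$, the requirement $\lambda(1-\rho)\ge h\ln(1/\rho)$ becomes $(L\Lambda/h)(1-\rho)\ge \ln(1/\rho)$, which holds for all large $h$. In regime (2), $L=\Theta(h)$ and $\Lambda>\alpha$; again using $R_i\le h$, the requirement reduces to $\Lambda(1-\rho)\ge\alpha\ln(1/\rho)$, and since $\rho\mapsto\Lambda(1-\rho)+\alpha\ln\rho$ has derivative $\alpha-\Lambda<0$ at $\rho=1$, it is strictly positive for $\rho$ slightly below $1$ --- which is precisely why the hypothesis here is the sharper $\Lambda>\alpha$ rather than the instability threshold $\Lambda>\sum_{x=1}^{M}e^{-\alpha}\alpha^{x}/(x-1)!$. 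In regime (1), $L=o(h^{1-\epsilon})$ makes $\alpha\to\infty$, and since each $B_{h,1/L}(x)\to 0$ for $1\le x\le M$ we get $\beta_h\to 0$; thus for any fixed $\rho\in(0,1)$ one has $L\beta_h\le L\Lambda(1-\rho)=\lambda(1-\rho)$ for all large $h$, and the refined bound closes the argument. Collecting the three cases yields a finite $N_0$ and a $\rho$ for which $\rho^{h}$ is a bounded supermartingale off $\{0,\dots,N_0\}$, so the chain is transient.

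The delicate point is the refined exponential-moment estimate needed in regime (1): the crude bound $R_i\le ML$ only certifies transience when $\Lambda\ge M$, so one must genuinely exploit that decoded slots are exponentially rare as $\alpha\to\infty$. The main obstacle is therefore the decoupling of the dependent slot occupancies: the clean product formula $E[\rho^{-R_i}]\le\exp(L\beta_h)$ holds verbatim for independent Poisson occupancies, and transferring it to the exact multinomial model --- via de-Poissonisation, or via a negative-association inequality for the non-monotone events $\{1\le x_l\le M\}$ --- is where the real work lies. Everything else is a direct adaptation of the FSA-SPR argument of Theorem~\ref{th:NON_oh}, with the single-slot success indicator replaced by the MPR contribution $x\,\mathbf{1}[1\le x\le M]$ of a slot with occupancy $x$.
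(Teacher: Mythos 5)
Your Lyapunov strategy is sound in outline: the Fayolle--Malyshev--Menshikov criterion you invoke is equivalent to the paper's Lemma~\ref{Le: transi}, and the factorisation $E[\rho^{X_{i+1}}\mid X_i=h]=\rho^{h}E[\rho^{N_i}]\,E[\rho^{-R_i}]$ is legitimate because arrivals during frame $i$ join the backlog only at frame $i+1$ and are independent of the decoding outcome. For regimes (2) and (3) your argument is complete and in fact cleaner than the paper's: with only the crude bound $R_i\le h$ and the Poisson generating function, the supermartingale condition reduces to $\alpha\ln(1/\rho)\le\Lambda(1-\rho)$, which holds for $\rho$ close to $1$ when $\Lambda>\alpha$, and for any fixed $\rho\in(0,1)$ when $h/L\to 0$; the paper instead pushes the polynomial sequence $y_i=(i+1)^{-\theta}$ through the transition matrix using Poisson tail bounds (its inequalities \eqref{eq:bound_Poi} and \eqref{eq:Trans_O}), which is heavier. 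The one caveat is that you need your scalar inequality uniformly in $h$, so in regime (2) you should read the hypothesis as $\limsup_h h/L<\Lambda$, but that is the same reading of ``$\Lambda>\alpha$'' that the paper uses.

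The genuine gap is regime (1), and it is precisely the step you flagged but did not close. The estimate $E[\rho^{-R_i}]\le\exp(L\beta_h)$ is derived only for \emph{independent} Poisson slot occupancies; under the true multinomial model the $x_l$ are dependent (they sum to $h$), and negative association cannot rescue the product bound because the per-slot payoff $x\,\mathbf{1}[1\le x\le M]$ --- equivalently the function $\rho^{-x\mathbf{1}[1\le x\le M]}$ --- is not monotone in $x$: NA gives $E[\prod_l g_l(x_l)]\le\prod_l E[g_l(x_l)]$ only when the $g_l$ are all nondecreasing or all nonincreasing. Since, as you yourself observe, the crude bound $R_i\le ML$ certifies transience only when $\Lambda$ exceeds roughly $M$, while the theorem in regime (1) is claimed for \emph{every} arrival rate, the whole of Case (1) rests on this unproven decoupling, so the proof is incomplete there. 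For comparison, the paper confronts the same dependence obstacle but sidesteps exponential moments entirely: it shows in \eqref{eq_rho} that $\rho_j=Le^{-h/L}(h/L)^j/j!$ remains bounded when $L=o(h^{1-\epsilon})$, invokes Feller's Poissonisation result (Lemma~\ref{Lem:poisson}) to approximate the law of the number of decodable slots, deduces $\xi_{hk}\le 1-e^{-M\rho_M}$ for $k\ge 1$ with $\rho_M\to 0$ exponentially, and then verifies the hypothesis $\lim_{k\to\infty}k^2\sup_{h\ge k}\xi_{hk}=0$ of Lemma~\ref{Le:sup_oh}, which was established once and for all in the SPR analysis. The most economical repair of your proposal is a hybrid: keep your exponential test function for regimes (2) and (3), and graft Lemma~\ref{Le:sup_oh} together with the Poissonisation bound onto regime (1); otherwise you must supply a rigorous de-Poissonisation (or a conditional occupancy moment inequality) for $E[\rho^{-R_i}]$, which is the real missing content.
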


\begin{remark}
Theorem~\ref{th:NON_oh_OH-MPR} demonstrates that despite the stability gain of FSA-MPR over FSA-MPR, the properties in the unstable region is almost the same.
\end{remark}


\section{Stability Analysis of Frame Slotted Aloha with Single packet Reception}
\label{sec:Sta-SPR}
In this section, we will analyse the stability of FSA-SPA and prove Theorem~\ref{theorem:stable} and~\ref{th:NON_oh}.
\subsection{Characterising backlog Markov chain}

As mentioned in Sec.~\ref{sec:MR}, we characterize the number $(X_i)_{i \geq 0}$ of the backlogged packets in the system at the beginning of frame $i$ as a homogeneous Markov chain. We now calculate the one-step transition probability as a function of $\xi_{h,k}$ and $\{\lambda_n \}_{n \geq 0}$. Denoting by $h$ the number of the backlogged packets in frame $i$, we obtain the one-step transition probability $\mathbf{P}=\{P_{hk}\}$ as follows:
\begin{itemize}
\item For $h=0$:
\begin{eqnarray*}
\begin{cases}
P_{00} =         \lambda_0 ,\\
P_{0k} =         \lambda_k , \quad k \geq 1 ,
\end{cases}
\end{eqnarray*}
\item For $h \geq 1$:
\begin{align}
\label{transition Pro}
\begin{cases}
\displaystyle P_{h,h-k} =  \sum_{n=0}^{\min(h,L)-k} {\lambda_n \xi_{h,n+k}}   , & 1 \leq k \leq  \min(h,L) , \\
\displaystyle P_{h,k} =  \lambda_0 \xi_{h,0} + \sum_{n=1}^{\min(h,L)} {\lambda_n \xi_{h,n}}  , & k=h   ,  \\
\displaystyle P_{h,h+k} =  \sum_{n=0}^{\min(h,L)} {\lambda_{n+k} \xi_{h,n}}  , & k \geq 1 .
\end{cases}
\end{align}
\end{itemize}

The rationale for the calculation of transition probability is explained as follows:
\begin{itemize}
\item When $h=0$, i.e., there are no backlogged packets in the frame, the backlog size remains $0$ if no new packets arrive and increases by $k$ if $k$ new packets arrive in the frame.
\item When $h>0$, we have three possibilities, corresponding to the cases where the backlog size decreases, remains unchanged and increases, respectively:
    \begin{itemize}
    \item The state $ 1 \leq k \leq \min(h,L)$ corresponds to the case where the backlog size decreases when $n \leq \min(h,L)-k$ new packets arrive and the number of successfully transmitted packets is between $n$ and $\min(h,L)$.
    \item The backlog size remains unchanged if either of two following events happens: (a) no new packets are generated and all the retransmitted backlogged packets fail; (b) $n \leq \min(h,L)$ new packets arrive and the number of the successfully received packets is also $n$.
    \item The backlog size increases if the number of successfully transmitted packets is less than that of new arrivals.
    \end{itemize}
\end{itemize}

In order to establish the ergodicity of the backlog Markov chain $(X_i)_{i \geq 0}$, it is necessary to ensure $(X_i)_{i \geq 0}$ is irreducible and aperiodic. To this end, we conclude this subsection by providing the sufficient conditions on $\{ \lambda_n \}$ for the irreducibility and the aperiodicity of $(X_i)_{i \geq 0}$ as follows:
\begin{eqnarray}
0 < \lambda_n < 1, \ \forall \ n \geq 0.
\label{eq:condition}
\end{eqnarray}

We would like to point out that most of traffic models can satisfy~\eqref{eq:condition}. Throughout the paper, it is assumed that~\eqref{eq:condition} holds and hence $(X_i)_{i \geq 0}$ is irreducible and aperiodic.

\subsection{Stability analysis}

Recalling Definition~\ref{def:sta}, to study the stability of FSA, we need to analyse the ergodicity of the backlog Markov chain $(X_i)_{i \geq 0}$. We first introduce two auxiliary lemmas which will be useful in the ergodicity demonstration.

\begin{lemma}[\cite{Parks1969}]
\label{lemma:pakes}
Given an irreducible and aperiodic Markov chain $(X_i)_{i\geq0}$ such that
\begin{enumerate}
\item the state space is nonnegative integers with the transition probability matrix $ \{P_{hk} \} $,
\item the expected drift at the state $h$, $|D_h|< \infty$, $\forall \ h$,
\item $\lim{\sup_{h\to \infty}}D_h < 0$,
\end{enumerate}
it holds that $(X_i)_{i\geq0}$ is ergodic.
\end{lemma}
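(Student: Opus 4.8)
The plan is to recognise this statement as an instance of the Foster--Lyapunov drift criterion and to prove it by exhibiting the identity map as a Lyapunov function. Since the chain is assumed irreducible and aperiodic, ergodicity is equivalent to positive recurrence, so the whole task reduces to deducing positive recurrence from the asymptotically negative drift. Writing the one-step mean drift as $D_h = \sum_{k \geq 0}(k-h)P_{hk} = E[X_{i+1}-X_i \mid X_i = h]$, I would take the test function $V(h) = h$, so that the drift of $V$ is exactly $D_h$.

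First I would convert the limsup hypothesis into a uniform estimate: by condition~(3) there exist $\delta > 0$ and a threshold $h_0$ such that $D_h \leq -\delta$ for every $h > h_0$. Set the finite exceptional set $A = \{0,1,\dots,h_0\}$. On $A$, condition~(2) guarantees $|D_h| < \infty$, hence $E[V(X_{i+1}) \mid X_i = h] = h + D_h < \infty$ for each $h \in A$. These are precisely the two hypotheses of Foster's theorem (drift bounded away from $0$ outside a finite set, finite expected increment on the set), so invoking it yields positive recurrence and therefore ergodicity.

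To make the argument self-contained rather than merely citing Foster, I would supply the supermartingale mechanism explicitly. Let $\tau = \inf\{i \geq 1 : X_i \in A\}$ be the first hitting time of $A$ and fix a deterministic start $X_0 = h \notin A$. On the event $\{i < \tau\}$ the chain sits strictly above $h_0$, so $E[X_{i+1} \mid \mathcal{F}_i] \leq X_i - \delta$; consequently the stopped process $X_{i \wedge \tau} + \delta\,(i \wedge \tau)$ is a nonnegative supermartingale. Taking expectations and letting $i \to \infty$ yields $\delta\,E[\tau] \leq h < \infty$, i.e.\ $A$ is reached in finite expected time from every state, which combined with irreducibility is the defining property of positive recurrence.

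The step I expect to be the main obstacle is the passage from the supermartingale inequality to $E[\tau] < \infty$ and thence to positive recurrence of every state. Because the process is stopped, the limit $i \to \infty$ must be justified by monotone convergence applied to $\delta\,(i \wedge \tau)$ together with the nonnegativity of $X_{i \wedge \tau}$, rather than by a naive optional-stopping argument that would demand uniform integrability not furnished by the hypotheses. A second subtlety is that finite mean hitting time of the finite set $A$ does not by itself single out a positive-recurrent state; here one uses irreducibility to travel from $A$ to a fixed reference state in finite expected time and aperiodicity to upgrade recurrence of that state to full ergodicity. Condition~(2) is exactly what prevents the finitely many boundary states in $A$ from generating an infinite excursion and thereby breaking this final reduction.
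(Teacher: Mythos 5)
Your proposal is correct, but note that the paper never proves this statement at all: Lemma~\ref{lemma:pakes} is quoted as a known result (Pakes' lemma, cited as \cite{Parks1969}) and used as a black box in the proofs of Theorems~\ref{theorem:stable} and~\ref{theorem:stable-MPR}. What you have written is essentially the classical derivation that the cited literature delegates to Foster's criterion: extract $\delta>0$ and $h_0$ from the limsup condition, take $V(h)=h$, and run the supermartingale argument on $X_{i\wedge\tau}+\delta\,(i\wedge\tau)$ to get $E[\tau]\le h/\delta$ for starts outside $A=\{0,\dots,h_0\}$. Your handling of the delicate points is sound: monotone convergence on $\delta\,(i\wedge\tau)$ (rather than optional stopping with uniform integrability) is the right way to pass to the limit, and you correctly identify that condition~(2) is not decorative --- without a finite one-step mean increment from the states in $A$, the expected return time to $A$ can be infinite and the chain merely null recurrent, so positive recurrence genuinely fails. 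The one place your sketch compresses a standard but nontrivial step is the final passage from ``the finite set $A$ has uniformly bounded expected return time'' to positive recurrence of a fixed reference state; making it airtight requires the induced chain on $A$ together with a Wald-type decomposition of the return time, with irreducibility gluing the pieces and aperiodicity then giving ergodicity in the sense of Definition~\ref{def:sta}. Since you flag exactly this reduction and the ingredients it needs, the proposal stands as a complete and correct substitute for the citation.
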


\begin{lemma}[\cite{Kaplan1979}]
\label{lemma:kaplan}
Under the assumptions of Lemma \ref{lemma:pakes}, if for some integer $Q\geq0$ and some constant $B \geq 0$, $c \in [0,1]$ such that
\begin{enumerate}
\item $D_h> 0$, for all $h \geq Q$,
\item $\phi^h - \sum_{k}{P_{hk}\phi^h \geq -B(1-\phi)}$, for all $h \geq Q$, $\phi \in [c,1]$,
\end{enumerate}
then $(X_i)_{i\geq0}$ is not ergodic.
\end{lemma}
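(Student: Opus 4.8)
The plan is to read this lemma as the converse of the Foster--Lyapunov criterion of Lemma~\ref{lemma:pakes}: there, a uniformly negative drift forces ergodicity, whereas here a uniformly \emph{positive} mean drift (condition~1) together with a regularity condition that rules out large downward jumps (condition~2) should force non-ergodicity. The natural instrument is a bounded exponential Lyapunov function $V(h)=\phi^{h}$, with $\phi\in[c,1)$ a parameter to be tuned. The two hypotheses are tailored to this choice: rewriting condition~2 as $\sum_{k}P_{hk}\phi^{k}-\phi^{h}\le B(1-\phi)$ says precisely that the one-step drift of $V$ is bounded above by the small quantity $B(1-\phi)$ for every $h\ge Q$, while condition~1 guarantees, through the identity $\lim_{\phi\to1}\frac{\phi^{h}-\sum_{k}P_{hk}\phi^{k}}{1-\phi}=D_{h}>0$, that this drift is in fact strictly negative once $\phi$ is close enough to $1$ --- i.e. $V$ is a genuine supermartingale increment off the finite set $A=\{0,\dots,Q-1\}$.

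First I would introduce the entrance time $\tau=\inf\{i\ge0:X_{i}<Q\}$ into $A$ and work with the stopped process $X_{i\wedge\tau}$. On $\{i<\tau\}$ we have $X_i\ge Q$, so condition~2 applies and $E[\phi^{X_{i+1}}\mid\mathcal F_i]\le\phi^{X_i}+B(1-\phi)$; compensating the slack, the process $M_i=\phi^{X_{i\wedge\tau}}-B(1-\phi)(i\wedge\tau)$ is a supermartingale bounded above by $1$. Optional stopping applied to $M_i$ then yields a lower bound on $E_{h_0}[\tau]$ for a starting state $h_0\ge Q$. Letting $\phi\uparrow1$ and using condition~1 to make the strictly positive drift dominate the compensation term, I would upgrade this to the statement that the chain started high cannot return to $A$ in finite expected time (non-positive-recurrence), or, in the stronger regime, exhibit a state $h_0$ with $V(h_0)<\min_{h\in A}V(h)$ so that the bounded supermartingale $\phi^{X_{i\wedge\tau}}$ forces $P_{h_0}(\tau=\infty)>0$ (transience). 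Either conclusion contradicts ergodicity and proves the lemma.

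An alternative, and in outline cleaner, route is a direct contradiction: assume $(X_i)$ is ergodic with stationary law $\pi$. Stationarity gives the generating-function identity $\sum_{h}\pi_{h}\big(\sum_{k}P_{hk}\phi^{k}-\phi^{h}\big)=0$ for all $\phi\in[c,1)$. Splitting the sum at $Q$, bounding the $h\ge Q$ part by condition~2 and Taylor-expanding the finite $h<Q$ part around $\phi=1$ with the help of condition~1, one should be able to pin the stationary tail against an inequality it cannot satisfy; this route, however, carries the extra burden of justifying finiteness of the relevant stationary moments before the $\phi\to1$ expansion is legitimate.

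The hard part, in either route, is the $B(1-\phi)$ slack in condition~2: for a fixed $\phi<1$ the inequality only makes $\phi^{h}$ an \emph{approximate} supermartingale, and for large $h$ (where $\phi^{h}$ is negligible) condition~2 merely asserts that the chain is unlikely to drop to low states in a single step. The delicate point is therefore to let $\phi\uparrow1$ while retaining \emph{uniform} control over all $h\ge Q$, so that the strict positivity of $D_h$ from condition~1 dominates the compensation uniformly rather than only pointwise in $h$, and to justify the attendant interchange of limit and expectation (via uniform integrability or a bounded-increments bound). It is exactly the generating-function form of condition~2 --- which simultaneously controls all moments of the downward displacement, not just its mean --- that makes this uniform domination possible, and carrying it out rigorously is the crux of the argument.
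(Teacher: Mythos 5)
First, a point of order: the paper never proves Lemma~\ref{lemma:kaplan}; it imports it verbatim (modulo an obvious typo --- the summand in condition~2 should be $P_{hk}\phi^{k}$, not $P_{hk}\phi^{h}$) from Kaplan's 1979 note, so your proposal can only be measured against that classical argument, which is your ``alternative'' route, not your primary one. Your primary route (supermartingale plus optional stopping) contains a genuine gap that cannot be repaired. The conclusions it targets --- transience, or infinite expected return time established via a $\phi$-supermartingale --- are strictly stronger than non-ergodicity and simply do not follow from hypotheses 1--2: consider the birth--death chain with $P_{h,h+1}=\tfrac12+\tfrac{1}{8h}$, $P_{h,h-1}=\tfrac12-\tfrac{1}{8h}$ (reflected at $0$, with a self-loop there for aperiodicity). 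A direct computation gives $f_h(\phi):=\frac{\phi^h-\sum_k P_{hk}\phi^k}{1-\phi}=(\tfrac12+\tfrac{1}{8h})\,\phi^{h-1}(\phi-r_h)$ with $r_h=\frac{1/2-1/(8h)}{1/2+1/(8h)}$, so $|f_h(\phi)|\le 1$ (condition~2 holds with $B=1$) and $D_h=\tfrac{1}{4h}>0$ (condition~1 holds), yet the chain is \emph{null recurrent}, not transient. The same formula shows that $\phi^{X}$ is a supermartingale at level $h$ only when $\phi\ge r_h$, and $\sup_h r_h=1$: the ``uniform domination'' you call the crux is not delicate but false in general, because condition~2 is only a uniform \emph{lower} bound $f_h\ge -B$ (a cap on downward jumps), never an upper bound by $0$. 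As for the weaker goal, your optional-stopping inequality yields $\phi^{Q-1}\le\phi^{h_0}+B(1-\phi)E_{h_0}[\tau]$, hence after $\phi\uparrow1$ only $E_{h_0}[\tau]\ge (h_0-Q+1)/B$ --- a finite, linear-in-$h_0$ bound that is perfectly compatible with positive recurrence (a reflected walk with negative drift has $E_{h_0}[\tau_0]=\Theta(h_0)$). So route~1 never produces a contradiction with ergodicity.

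Your second route is the correct skeleton --- it is essentially Kaplan's proof --- but the step you leave as ``pin the stationary tail against an inequality it cannot satisfy'' is exactly where the content lies, and the obstacle you flag (finiteness of stationary moments) is a red herring. For fixed $\phi<1$ all quantities are bounded, so Tonelli gives the identity $\sum_h\pi_h f_h(\phi)=0$ with no moment hypothesis; condition~2 lets you apply Fatou to the nonnegative terms $\pi_h\bigl(f_h(\phi)+B\bigr)$, $h\ge Q$, and since $f_h(\phi)\to D_h$ as $\phi\uparrow1$ this yields $0\ge\sum_{h<Q}\pi_h D_h+\sum_{h\ge Q}\pi_h D_h$, whence by condition~1 and irreducibility $\sum_{h<Q}\pi_h D_h<0$. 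Contrary to what one might hope, this is \emph{not} yet a contradiction: when $\sum_h h\pi_h=\infty$ the interchange $\sum_h\pi_h\sum_k kP_{hk}=\sum_k k\pi_k$ reads $\infty=\infty$, so an ergodic chain is not a priori forbidden from having negative average drift. The missing lemma is that stationarity forces $\sum_{h<Q}\pi_h D_h\ge0$: taking the bounded truncation $T(x)=\min(x,Q)$, one has $0=E_\pi[T(X_1)]-E_\pi[T(X_0)]=\sum_h\pi_h E\bigl[T(X_1)-T(h)\mid X_0=h\bigr]\le\sum_{h<Q}\pi_h D_h$, because the conditional increment is at most $D_h$ for $h<Q$ and at most $0$ for $h\ge Q$; no moment condition is needed since $T$ is bounded. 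With this observation route~2 closes; without it, neither of your routes reaches a contradiction, so as written the proposal does not prove the lemma.
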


Armed with Lemma~\ref{lemma:pakes} and Lemma~\ref{lemma:kaplan}, we further prove Theorem~\ref{theorem:stable}.

\begin{proof}[Proof of Theorem \ref{theorem:stable}]
In the proof, we first formulate the expected drift and then study the ergodicity of Markov chain based on drift analysis.

Let $D_h$ be the expected drift of the backlog Markov chain $(X_i)_{i \geq 0}$ at state $h \ (h\geq0)$, formally defined as follows:
\begin{equation}
D_h = E[X_{i+1} - X_i| X_i = h].
\end{equation}
And denote by $C_i$ the number of successful transmissions in frame $i$, we have
\begin{equation*}
X_{i+1} - X_i = N_i - C_i.
\end{equation*}
It then follows that
\begin{equation}
D_h = E[N_i - C_i| X_i = h] = \lambda - E[C_i|X_i].
\label{eq_1}
\end{equation}

Since all of the backlogged packets in frame $i$ are retransmitted, we have
\begin{equation*}
P\{C_i = k | X_i = h\}= \xi_{h, k}, \ 0\leq k \leq \min(h,L),
\end{equation*}
where $\xi_{00} = 0$. Recalling the definition of $r_h$ as the expected number of successfully received packets in the frame, we have
\begin{equation}
E[C_i = k | X_i = h]= r_{h}.
\label{eq_B}
\end{equation}

Following \eqref{eq_1} and \eqref{eq_B}, we obtain the value of the expected drifts as follows:
\begin{equation}
\label{eq:D}
D_h = \lambda -  r_{h}.
\end{equation}

After formulating the expected drift, we then proceed by two steps.

\textbf{Step 1: $L= \Theta(h)$ and $ \Lambda < \alpha e^{-\alpha}$.}

We first show that $|D_h|$ is finite. From \eqref{eq:D}, we have
\begin{eqnarray}
\label{eq:finite}
|D_h| = |\lambda -  r_{h}|  \leq |\lambda| + | r_{h}|   \leq \lambda + \sum_{k=1}^{h} {k \xi_{h,k}}   \leq \lambda + L,
\end{eqnarray}
which demonstrates that $|D_h|$ is finite.

We then derive the limit of $D_h$:
\begin{equation}
\label{eq:lim D}
\lim_{h \to \infty} D_h
= \lambda - \lim_{h \to \infty} r_h  = L \left\{\Lambda - \lim_{h \to \infty} { \binom {h}{1} \frac{1}{L} \left[1 -
    \frac{1}{L}\right]^{h-1}}\right\}
= L (\Lambda - \alpha e^{-\alpha}),
\end{equation}
where $\alpha \triangleq \frac{h}{L}$.
We thus have
\begin{equation*}
\lim_{h \to \infty} D_h  < 0,
\end{equation*}
when $L= \Theta(h)$ and  $ \Lambda < \alpha e^{-\alpha}$.

It then follows from Lemma~\ref{lemma:pakes} that $(X_i)_{i \geq 0}$ is ergodic. Specially, when $\alpha =1$, the system stability region is maximized, i.e., $\Lambda < e^{-1}$.

\textbf{Step 2: $L = o(h)$ or $L = O(h)$ or $L = \Theta(h)$ and $\Lambda > \alpha e^{-\alpha}$.}

In this step, we prove the instability of $(X_i)_{i \geq 0}$ by applying Lemma~\ref{lemma:kaplan}. Taking into consideration the impact of different relation between $L$ and $h$ on the limit of $D_h$, we have:
\begin{itemize}
\item when $L=o(h)$, it holds that
  \begin{equation*}
       \Lambda -\lim_{h \to \infty} { \binom {h}{1} (\frac{1}{o(h)}) (1 - \frac{1}{o(h)})^{h-1}} \\
     = \Lambda - \lim_{h \to \infty} \frac{h}{o(h)} e^{-\frac{h}{o(h)}} = \Lambda >0;
  \end{equation*}
\item when $L=O(h)$, it holds that
  \begin{equation*}
        \Lambda - \lim_{h \to \infty} { \binom {h}{1} (\frac{1}{O(h)}) (1 - \frac{1}{O(h)})^{h-1}} \\
      = \Lambda - \lim_{h \to \infty} \frac{h}{O(h)} e^{-\frac{h}{O(h)}} = \Lambda > 0;
  \end{equation*}
\item when $L=\Theta(h)$ and $\Lambda > \alpha e^{-\alpha}$, it holds that
  $$\Lambda - \lim_{h \to \infty} { \binom {h}{1} \frac{1}{\Theta(h)} \left[1 - \frac{1}{\Theta(h)}\right]^{h-1}}=\Lambda - \alpha e^{-\alpha} >0.$$
\end{itemize}

Substituting these results into~\eqref{eq:lim D}, we have $\displaystyle \lim_{h \to \infty} D_h > 0$, which proves the first condition in Lemma \ref{lemma:kaplan}.

%

Next, we will show that the second condition is also satisfied.
Referring to \cite{Sennott1983}, The key step here is to show that the downward part of the expected drift at state $h \ (h\geq0)$, formally defined as
\begin{equation}
d_{h^-} \triangleq \sum_{k= \max(h-L,0)}^{h} {(k-h) P_{h,k}},
\end{equation}
is lower-bounded. According to the transition probability \eqref{transition Pro}, we have
\begin{equation*}
d_{h^-}       = - \sum_{k=1}^{\min(h,L)} k \sum_{n=0}^{\min(h,L)-k} { \lambda_n \xi_{h,n+k} }.
\end{equation*}

After some algebraic operations, we further get
\begin{align}
\label{eq:down_dr}
d_{h^-}
      & = - \sum_{n=0}^{\min(h,L)} \lambda_n \sum_{k=0}^{\min(h,L)-n} {k \xi_{h,n+k} }
        \geq - \sum_{n=0}^{\min(h,L)} \lambda_n \sum_{k=0}^{\min(h,L)-n}  (k+n) \xi_{h,n+k}
        \geq - \sum_{n=0}^{\min(h,L)} \lambda_n \sum_{k=0}^{\min(h,L)} k \xi_{h,k} \nonumber \\
      & \geq - \sum_{n=0}^{\min(h,L)} \lambda_n r_{h}
        \geq - \min(h,L).
\end{align}

It then holds that $d_{h^-}$ is lower-bounded, which completes the proof of Step 2 and also the proof of Theorem~\ref{theorem:stable}.
\end{proof}

\subsection{System behavior in instability region}

It follows from Theorem~\ref{theorem:stable} that the system is not stable in the following three conditions: $L = o(h)$; $L = O(h)$; and $L = \Theta(h)$ but $\Lambda > \alpha e^{-\alpha}$. In this section, we further investigate the system behavior in the instability region, i.e., when $(X_i)_{i \geq 0}$ is nonergodic. The key results are given in Theorem~\ref{th:NON_oh}.

Before proving Theorem~\ref{th:NON_oh}, we first introduce the following lemma~\cite{Mertens1978} on the conditions for the transience of a Markov chain.

\begin{lemma}[\cite{Mertens1978}]
\label{Le: transi}
Let $(X_i)_{i \geq 0}$ be an irreducible and aperiodic Markov chain with the nonnegative integers as its state space and one-step transition probability matrix $\mathbf{P}=\{P_{hk}\}$. $(X_i)_{i \geq 0}$ is transient if and only if there exists a sequence $\{y_i\}_{i \geq 0}$ such that
    \begin{enumerate}
    \item $y_i$ $(i \geq 0)$ is bounded,
    \item for some $h \geq N$, $y_h < y_0,\ y_1,\ \cdots,\ y_{N-1}$,
    \item for some integer $N>0$, $\sum_{k=0}^{\infty} {y_k P_{hk}} \leq y_h$, $\forall \ h \geq N$.
    \end{enumerate}
\end{lemma}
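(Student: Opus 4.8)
The plan is to read this as a two-sided (Foster--Lyapunov type) characterisation of transience and prove each implication separately, with the theory of bounded supermartingales as the central tool. Throughout, I would write $A \triangleq \{0,1,\ldots,N-1\}$ for the finite set singled out by the hypotheses, set $Y_i \triangleq y_{X_i}$, and let $\tau \triangleq \inf\{i \geq 0 : X_i \in A\}$ denote the first hitting time of $A$. The role of conditions 1--3 is, respectively, to guarantee a bounded test function, to locate a starting state that is ``deep'' in the escape region, and to make $Y$ a supermartingale outside the finite set $A$.

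For the sufficiency direction (``if''), suppose a sequence $\{y_i\}$ obeying conditions 1--3 exists. Condition 3 states exactly that $E[Y_{i+1} \mid X_i = h] \leq Y_i$ whenever $X_i = h \geq N$, so the stopped process $(Y_{i \wedge \tau})_{i \geq 0}$ is a supermartingale, and by condition 1 it is bounded. Starting the chain at the state $h \geq N$ furnished by condition 2, the supermartingale inequality gives $y_h = E[Y_0] \geq E[Y_{i \wedge \tau}]$ for every $i$. Splitting on $\{\tau \leq i\}$ and $\{\tau > i\}$, using $Y_\tau \geq m \triangleq \min_{0 \leq j \leq N-1} y_j$ on the first event and $Y_i \geq b \triangleq \inf_i y_i$ on the second, and letting $i \to \infty$, I would obtain
\[
P_h(\tau < \infty) \leq \frac{y_h - b}{m - b} < 1,
\]
the strict inequality being precisely condition 2 ($y_h < m$, while $b \leq y_h$). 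Thus from $h$ the chain avoids the finite set $A$ with positive probability. Since the chain is irreducible, an irreducible recurrent chain would visit every state, and in particular every state of the nonempty finite set $A$, with probability one from any starting point; the displayed bound contradicts this, so the chain must be transient.

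For the necessity direction (``only if''), I would assume transience and construct an explicit witness. Fix state $0$, take $N = 1$ (so $A = \{0\}$), and define $y_h \triangleq P_h(X_n = 0 \text{ for some } n \geq 0)$, the probability of ever reaching $0$ from $h$. Condition 1 is immediate since $y_h \in [0,1]$. First-step analysis yields $y_h = \sum_k P_{hk} y_k$ for every $h \neq 0$ (harmonicity off $0$), which is condition 3 with equality for all $h \geq N = 1$. For condition 2 I would argue by contradiction: $y_0 = 1$, and if $y_h = 1$ held for all $h \geq 1$, then the return probability $\sum_k P_{0k} y_k$ would equal $1$, contradicting transience of $0$; hence some $h \geq 1$ satisfies $y_h < 1 = y_0$, which is exactly condition 2.

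The main obstacle I anticipate is the passage to the limit $i \to \infty$ in the supermartingale bound of the sufficiency direction: one must justify that the inequality $y_h \geq m\,P_h(\tau \leq i) + b\,P_h(\tau > i)$ survives the limit, i.e.\ correctly account for the contribution of the event $\{\tau = \infty\}$. Boundedness (condition 1) is what lets me invoke the bounded convergence theorem, or equivalently optional stopping for bounded supermartingales, after which the remaining steps are purely algebraic. A secondary delicate point is the verification of condition 2 in the necessity direction, where transience enters only through the strict inequality of the return probability; I would set up the first-step decomposition of that return probability carefully so that $y_h = 1$ for all $h \geq 1$ genuinely forces a return probability of $1$.
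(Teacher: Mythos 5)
The first thing to note is that the paper offers no proof of this lemma: it is imported verbatim from \cite{Mertens1978} as a known transience criterion, so there is no internal argument to compare yours against --- what matters is whether your blind proof stands on its own. It does. In the sufficiency direction, the stopped process $Y_{i \wedge \tau}$ is indeed a bounded supermartingale under $P_h$ for $h \geq N$; the inequality $y_h \geq m\,P_h(\tau \leq i) + b\,P_h(\tau > i)$ passes to the limit $i \to \infty$ without difficulty (it is a limit of real sequences; boundedness of $\{y_i\}$ enters only to guarantee $b > -\infty$ and integrability), and the division by $m - b$ is legitimate because $b \leq y_h < m$, so the bound $P_h(\tau < \infty) \leq (y_h - b)/(m - b) < 1$ is strict. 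This does contradict recurrence, since an irreducible recurrent chain hits every state, hence the finite set $A$, almost surely from every initial state; as an irreducible chain is either recurrent or transient, transience follows. In the necessity direction, taking $N = 1$ and $y_h$ the probability of ever hitting state $0$ gives conditions 1 and 3 (the latter with equality, by first-step analysis off state $0$), and your contradiction argument for condition 2 is exactly right: $y_h = 1$ for all $h \geq 1$ would force the return probability $\sum_k P_{0k} y_k$ of state $0$ to equal $1$, making $0$ recurrent. Two side remarks: aperiodicity is never used (correctly so --- the criterion is insensitive to periodicity), and your sufficiency argument only needs $\{y_i\}$ bounded below, so you have in fact established a marginally stronger statement than the one quoted. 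This supermartingale-plus-hitting-probability route is the standard proof of such criteria and is essentially the argument of the cited reference, so nothing in the paper's later use of the lemma (the proofs of Theorems~\ref{th:NON_oh} and~\ref{th:NON_oh_OH-MPR}, which only invoke it) is affected.
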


Armed with Lemma~\ref{Le: transi}, we now prove Theorem \ref{th:NON_oh}.

\begin{proof}[Proof of Theorem~\ref{th:NON_oh}]
The key to prove Theorem~\ref{th:NON_oh} is to show the existence of a sequence satisfying the properties in Lemma~\ref{Le: transi}, so we first construct the following sequence~\eqref{eq:Seq} and then prove that it satisfies the required conditions.
\begin{equation}
\label{eq:Seq}
y_i = \frac{1}{(i+1)^\theta}, \ \theta \in (0,1).
\end{equation}
It can be easily checked that $\{y_i\}$ satisfies the first two properties in Lemma~\ref{Le: transi}.

Next, We distinguish three cases in the rest of proof.

\textbf{Case 1: $L = o(h^{1- \epsilon})$.}

In this case, we first derive an equivalent condition on $\xi_{hk}$ shown in the following lemma for the third property in Lemma~\ref{Le: transi}. The proof is detailed in Appendix~\ref{Appendix:A}.


\begin{lemma}
\label{Le:sup_oh}
If $\lim_{k \to \infty} k^2 \sup_{h \geq k} \xi_{hk} = 0$, then $(X_i)_{i \geq 0}$ is always transient for $L=o(h^{1- \epsilon})$.
\end{lemma}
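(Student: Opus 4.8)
The plan is to verify the transience criterion of Lemma~\ref{Le: transi} using the very sequence $y_i=(i+1)^{-\theta}$, $\theta\in(0,1)$, introduced in~\eqref{eq:Seq}. Properties~1 and~2 are immediate, since $\{y_i\}$ is bounded and strictly decreasing, so the entire argument reduces to establishing property~3, i.e.\ $\sum_k y_k P_{hk}\le y_h$ for all $h$ beyond some threshold $N$. As $\sum_k P_{hk}=1$, this is equivalent to $\sum_k(y_k-y_h)P_{hk}\le 0$. Writing the one-step move as $X_{i+1}=h-C_i+N_i$, where the number of successes $C_i$ (with $P\{C_i=c\mid X_i=h\}=\xi_{h,c}$, $0\le c\le\min(h,L)$) and the number of arrivals $N_i$ are independent, I would rewrite the target as
\[
\sum_{c=0}^{\min(h,L)}\sum_{n\ge0}\xi_{h,c}\,\lambda_n\Bigl[(h+1-c+n)^{-\theta}-(h+1)^{-\theta}\Bigr]\le 0 .
\]
Factoring out $(h+1)^{-\theta}$ collapses property~3 into the single inequality $E\bigl[(1+W)^{-\theta}\bigr]\le 1$, where $W\triangleq(N_i-C_i)/(h+1)$.

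The heart of the proof is a second-order expansion of $\phi(w)\triangleq(1+w)^{-\theta}$ about $w=0$. Using $\phi(0)=1$, $\phi'(0)=-\theta$, and $\phi''(w)=\theta(\theta+1)(1+w)^{-\theta-2}$ (decreasing in $w$), I would bound
\[
E[\phi(W)]-1\le -\theta\,E[W]+\tfrac12\,\theta(\theta+1)\,(1+w_{\min})^{-\theta-2}\,E[W^2],
\]
where $w_{\min}\ge-\min(h,L)/(h+1)$ is the almost-sure lower bound of $W$ (the remainder's argument $\zeta$ always satisfies $\zeta\ge w_{\min}$, so $\phi''(\zeta)\le\phi''(w_{\min})$). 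The first-order term is exactly the rescaled positive drift, $-\theta E[W]=-\theta D_h/(h+1)$ with $D_h=\lambda-r_h$; because $L=o(h)$ forces $r_h=h(1-1/L)^{h-1}\to0$, we get $D_h\to\lambda>0$. Crucially, $L=o(h)$ also sends $w_{\min}\to0$, so the curvature prefactor $(1+w_{\min})^{-\theta-2}$ stays bounded by a constant $C$ despite the fact that $W$ has unbounded positive range; the arbitrarily large upward (arrival) jumps enter only through $E[W^2]$, which is finite under the Poisson assumption of Theorem~\ref{th:NON_oh}. Property~3 then follows as soon as the drift dominates the curvature, i.e.\ $\theta\,D_h\,(h+1)>C\,E\bigl[(N_i-C_i)^2\bigr]$ for all large $h$.

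The hard part, and the exact place where the lemma's hypothesis is consumed, is controlling the curvature term $E[(N_i-C_i)^2]=\mathrm{Var}(N_i)+\mathrm{Var}(C_i)+D_h^2$. Under Poisson arrivals $\mathrm{Var}(N_i)=\lambda$ and $D_h^2\to\lambda^2$, and both are dominated by $\theta D_h(h+1)\sim\theta\lambda h$ precisely because $\lambda=L\Lambda=o(h)$ when $L=o(h^{1-\epsilon})$. The genuinely delicate factor is the second moment of the successes, $\mathrm{Var}(C_i)\le\sum_c c^2\xi_{h,c}$. Here I would invoke $\lim_{k\to\infty}k^2\sup_{h'\ge k}\xi_{h',k}=0$: since every nonzero term has $c\le\min(h,L)$ and $c^2\xi_{h,c}\le c^2\sup_{h'\ge c}\xi_{h',c}\to0$, the at most $\min(h,L)$ surviving terms sum to $o(\min(h,L))=o(\lambda h)$. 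Assembling these estimates gives $C\,E[(N_i-C_i)^2]=o(\lambda h)=o\bigl(\theta D_h(h+1)\bigr)$, so the required inequality holds for all $h\ge N$, and Lemma~\ref{Le: transi} certifies transience for $L=o(h^{1-\epsilon})$. The restriction to $o(h^{1-\epsilon})$ (rather than merely $o(h)$) is what guarantees $\lambda=o(h)$ with comfortable margin and the super-polynomial decay of $r_h$, ensuring the drift term wins uniformly over the curvature term.
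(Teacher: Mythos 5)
Your proof is correct, and it reaches the paper's conclusion by a genuinely different route. Both arguments verify the transience criterion of Lemma~\ref{Le: transi} with the same test sequence $y_i=(i+1)^{-\theta}$, and both ultimately reduce property~3 to showing that a negative first-order (drift) term dominates a second-order remainder, with the hypothesis $\lim_{k\to\infty}k^2\sup_{h\ge k}\xi_{hk}=0$ used to tame the contribution of the success counts. The executions differ, however. The paper works term by term on the transition probabilities: it splits the conditional expectation into a downward part $f'(h)$ and an upward part $g'(h)$, proves $f'(h)+\theta f(h)\to 0$ and $g'(h)+\theta g(h)\to 0$ separately (via the $z_1,z_2$ splitting for the downward part and an $n\le N$ versus $n>N$ tail argument for the upward part), and concludes $f'(h)+g'(h)=-\theta D_h+o(1)<0$ since $D_h\to\lambda>0$ in this regime. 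You instead package the whole computation probabilistically as $E\bigl[(1+W)^{-\theta}\bigr]\le 1$ with $W=(N_i-C_i)/(h+1)$, apply one global second-order Taylor bound, and reduce everything to the moment comparison $\theta D_h(h+1)\ge C\,E\bigl[(N_i-C_i)^2\bigr]$; the hypothesis enters through $\mathrm{Var}(C_i)\le\sum_c c^2\xi_{h,c}$, bounded via the Ces\`aro argument (note that when $L$ stays bounded this sum is merely $O(1)$ rather than $o(L)$, but it is still dominated by $\theta\lambda h$, so your conclusion is unaffected), and $L=o(h)$ is exactly what keeps the curvature prefactor $(1+w_{\min})^{-\theta-2}$ bounded, as you observe. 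Your version is shorter and more transparent, but it costs a moment assumption: the curvature term needs $E[N_i^2]<\infty$, i.e., finite-variance arrivals, which you legitimately import from the Poisson setting of Theorem~\ref{th:NON_oh}. The paper's tail-splitting handles the upward jumps with only a first-order bound, so its proof of the lemma needs nothing beyond a finite mean arrival rate; that extra generality (the lemma then holds for any arrival law with $\lambda<\infty$, not just Poisson or finite-variance ones) is the main thing the longer, more technical argument buys.
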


Lemma \ref{Le:sup_oh} provides another approach to prove the transience of $(X_i)_{i \geq 0}$ in the instability region, i.e., whether $\xi_{hk}$ is of the property that $\lim_{k \to \infty}$ $ k^2 \sup_{h \geq k} \xi_{hk} = 0$ when $L= o(h^{1-\epsilon})$. To this end, we introduce the following lemma by which an approximate result on $\xi_{hk}$ can be derived.

\begin{lemma}[\cite{Feller2008}]
\label{Lem:poisson}
Let $h$ packets each is sent in a slot picked at random among $L$ time-slots in the frame $i$. And Let $h, L \to \infty$ such that $\rho_{_j} = L \frac{e^{-h/L}}{j !} (\frac{h}{L})^j $ remains bounded, then the probability $P(m_j)$ of finding exactly $m_j$ time-slots with $j$ packets can be approximated by the following Poisson distribution with the parameter $\rho_j$,
\begin{equation}
P( m_j) = e^{- \rho_{_j}} \frac{{\rho_{_j}}^{m_j}}{{m_j}!}.
\end{equation}
\end{lemma}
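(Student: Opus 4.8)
The plan is to establish this Poisson limit by the \emph{method of factorial moments}. Let $M_j$ denote the random number of time-slots that receive exactly $j$ of the $h$ packets, so that $P(m_j)=P\{M_j=m_j\}$ and, by linearity of expectation, $E[M_j]=L\,B_{h,1/L}(j)$, which under the stated scaling is asymptotically equal to $\rho_j$. The key fact I would invoke is that a sequence of nonnegative integer-valued random variables converges in distribution to a Poisson law of mean $\rho$ whenever all of its factorial moments converge to $\rho^{r}$; this criterion is valid precisely because the Poisson distribution is uniquely determined by its moments. Hence it suffices to show that, for every fixed integer $r\geq 1$,
\begin{equation*}
E\big[M_j(M_j-1)\cdots(M_j-r+1)\big] \longrightarrow \rho_j^{\,r}
\end{equation*}
as $h,L\to\infty$ with $\rho_j$ bounded.

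First I would express the $r$-th factorial moment as a sum, over ordered $r$-tuples of \emph{distinct} slots, of the probability that each slot in the tuple holds exactly $j$ packets. By symmetry every tuple contributes equally, so
\begin{equation*}
E\big[(M_j)_r\big] = (L)_r \, P_r,\qquad (L)_r \triangleq L(L-1)\cdots(L-r+1),
\end{equation*}
where $P_r$ is the probability that $r$ prescribed distinct slots each contain exactly $j$ packets. Counting the ways to deal out the $h$ packets (namely $j$ to each of the $r$ prescribed slots and the remaining $h-rj$ freely among the other $L-r$ slots) yields the exact multinomial expression
\begin{equation*}
P_r = \frac{h!}{(j!)^r\,(h-rj)!}\cdot\frac{(L-r)^{\,h-rj}}{L^{\,h}}.
\end{equation*}

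The remaining step is the asymptotic evaluation. Substituting and grouping factors, I would use $(L)_r = L^r(1+o(1))$, $h!/(h-rj)! = h^{rj}(1+o(1))$, and the estimate $(1-r/L)^{\,h-rj}\to e^{-r\alpha}$ with $\alpha=h/L$, to obtain
\begin{equation*}
E\big[(M_j)_r\big] \sim \frac{L^r\,h^{rj}}{(j!)^r\,L^{rj}}\,e^{-r\alpha} = \left(\frac{L}{j!}\Big(\frac{h}{L}\Big)^{j} e^{-h/L}\right)^{\!r} = \rho_j^{\,r},
\end{equation*}
which is exactly the $r$-th factorial moment of a $\mathrm{Poisson}(\rho_j)$ variable. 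Applying the moment criterion then gives $M_j\Rightarrow\mathrm{Poisson}(\rho_j)$, i.e. the claimed approximation of $P(m_j)$. I expect the main obstacle to be the uniform control of the error terms: one must verify that the corrections hidden in the $o(\cdot)$ factors and in the convergence $(1-r/L)^{h-rj}\to e^{-r\alpha}$ vanish simultaneously for each fixed $r$, and that the boundedness of $\rho_j$ indeed confines $\alpha=h/L$ to a regime where these estimates are legitimate (in particular where the higher-order terms in $h\ln(1-r/L)$ are negligible). Once the factorial moments are pinned down to $\rho_j^{\,r}$, the passage to the Poisson law follows automatically from the determinacy of the Poisson moment problem.
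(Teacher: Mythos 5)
The paper does not prove this lemma at all: it is imported as a black box from Feller's book, where the classical argument writes down the exact distribution of the number of cells with a given occupancy and passes to the limit via inclusion--exclusion estimates. Your factorial-moment route is therefore a genuinely different (and equally standard) path to the same Poisson limit: it trades the combinatorial cancellations of inclusion--exclusion for the moment-determinacy of the Poisson law. The skeleton of your argument is sound: the identity $E[(M_j)_r]=(L)_r\,\frac{h!}{(j!)^r(h-rj)!}\,\frac{(L-r)^{h-rj}}{L^h}$ is correct, and the criterion ``all factorial moments converge to $\rho^r$ implies convergence in distribution to $\mathrm{Poisson}(\rho)$'' is valid for nonnegative integer-valued variables.

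The step you leave as an expectation, however, is false as you state it: boundedness of $\rho_j$ does \emph{not} confine $\alpha=h/L$ to a regime where $(1-r/L)^{h-rj}e^{r\alpha}\to 1$. Expanding the logarithm gives an exponent $\tfrac{r^2 j}{L}-\tfrac{r^2\alpha}{2L}+O(h/L^3)$, which vanishes only if $h/L^2\to 0$; yet, for instance, $h=L^3$ keeps $\rho_j$ bounded (it tends to $0$ exponentially fast) while $\alpha/L\to\infty$, so the ratio $E[(M_j)_r]/\rho_j^r$ does \emph{not} tend to $1$ there. The repair is a case split along subsequences where $\rho_j\to\rho$. If $\rho>0$, then $\ln L-\alpha+j\ln\alpha=O(1)$ forces $\alpha=O(\ln L)$ (or $\alpha\to 0$ when $j\ge 1$), hence $h/L^2\to 0$ and your expansions are legitimate. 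If $\rho=0$, you do not need the ratio at all: from $1-x\le e^{-x}$, $(L)_r\le L^r$ and $h!/(h-rj)!\le h^{rj}$ one gets the clean bound $E[(M_j)_r]\le \rho_j^{\,r}\,e^{r^2 j/L}\to 0=\lim\rho_j^{\,r}$, so the factorial moments still converge to those of the degenerate Poisson. With this case split your proof closes; without it, the asymptotic evaluation is unjustified in part of the regime the lemma nominally covers.
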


Now, we want to show Lemma \ref{Lem:poisson} is applicable When $L = o(h^{1-\epsilon})$ for $h$ large enough. This is true because
\begin{align}
0 \leq \rho_j \leq \frac{h^j}{j! L^{j-1} e^{h^ \epsilon}} \leq \frac{h^j}{j! L^{j-1}} \cdot \frac{(\lceil \frac{1}{\epsilon} \rceil j)!} {(h^{ \epsilon})^{\lceil \frac{1}{\epsilon} \rceil j} } \leq  \frac{(\lceil \frac{1}{\epsilon} \rceil j)!}{j! {L^{j-1}}},
\label{eq_rho}
\end{align}
meaning that $\rho_j$ is bounded if $j$ is finite.

For FSA-SPR, the probability $\xi_{hk}$ that $k$ packets are received successfully is equivalent to the probability $P(m_j = k)$ when $j=1$, we thus have
\begin{equation}
  \xi_{hk} = e^{- \rho_{_1}} \frac{{\rho_{_1}}^{k}}{k!},
\end{equation}
where $\rho_1 \le \lceil \frac{1}{\epsilon} \rceil !$.

Using the inequality $\frac{n^k}{k!} \leq (\frac{ne}{k})^k$ yields
\begin{equation*}
0 \leq \xi_{hk} \leq e^{- \rho_1} (\frac{\rho_1 e}{k})^k  .
\end{equation*}
Thus,
\begin{equation*}
0 \leq k^2 \xi_{hk} \leq e^{- \rho_1} k^2 (\frac{\rho_1 e}{k})^k  .
\end{equation*}

Furthermore, since $e \rho_1 \leq e {\lceil \frac{1}{\epsilon} \rceil}!$ is constant, we have
\begin{equation*}
\lim_{k \to \infty} {e^{- \rho_1} k^2 (\frac{\rho_1 e}{k})^k} = 0,
\end{equation*}
which means $\xi_{hk}$ in FSA-SPR satisfies the property that $\lim_{k \to \infty}$ $ k^2 \sup_{h \geq k} \xi_{hk} = 0$ when $L = o(h^{1- \epsilon})$. The first case on the transience of $(x_i)_{i \ge 0}$ is proved.


\textbf{Case 2: $L = \Theta (h)$ and $\Lambda > \alpha e^{-\alpha}$.}

In this case, we directly prove that ${y_i}$ satisfies the third property in Lemma~\ref{Le: transi} by algebraic operations. The key steps we need here are to obtain upper bounds of $\xi_{hk}$ and the arrival rate in a new way. To this end, we first recomputed $\xi_{hk}$ as follows when $L = \Theta (h)$:

\begin{align}
\label{eq:Suc_case_2}
\begin{cases}
\displaystyle \xi_{hk} & =    \frac{\binom{L}{k} k! (L-k)^{h-k}}{L^h} - \frac{\binom{L}{k+1} (k+1)! (L-k-1)^{h-k-1}}{L^h} \\
                  & \le  \frac{\binom{L}{k} k!}{L^k} \big( (1-\frac{k}{L})^{h-k} - (1-\frac{k+1}{L})^{h-k} \big) \\
                  &  \le  \frac{\binom{L}{k} k!}{L^k} \le  (1 - \frac{k/2}{L})^{k/2},                                        \;\;\;\;\;\;\;\;\;\;\;\;\;\;\;\;\;\;\;\;\;\;\;\;\;\;\;\;\;\;\;\;\;\;\;\;\;\;\;k < \min(h,L),\\
\displaystyle \xi_{hh} &=    \frac{\binom{L}{h} h! }{L^h} \le  (1 - \frac{h/2}{L})^{h/2}
                   \le (1- \alpha / 2)^ {h/2},                                             \;\;\;\;\;\;\;\;\;\;\;\;\;\;k= h \le L.
\end{cases}
\end{align}

The rational behind the above inequalities is as follows: the event of exactly $k$ successful packets is equivalent to the event that at least $k$ successful packets deducts the part of at least $k+1$ successful packets.


Next, we introduce a lemma to bound the probability distribution of the arrival rate. When the number of new arrivals per slot $A_{il}$ is Poisson distributed with the mean $\Lambda$, the number of new arrivals per frame $N_i$ ($A_{il}$ and $N_i$ is formally defined in Sec.~\ref{sec:SM}.) is also a Poisson random variable with the mean $\lambda = L \Lambda > he^{-\alpha}$.

\begin{lemma}[\cite{Mitzenmacher2005}]
\label{lemma:aux}
Given a Poisson distributed variable $X$ with the mean $\mu$, the following inequality holds:
\begin{equation*}
Pr[X \le x] \le \frac{e^{-\mu}(e \mu)^x}{x^x} ,\ \forall \ x < \mu .
\end{equation*}
\end{lemma}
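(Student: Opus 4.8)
The plan is to establish this lower-tail bound on a Poisson random variable by a standard Chernoff-type argument: apply Markov's inequality to an exponential transform of $X$ and then optimize the free parameter. The one feature that distinguishes this from the more familiar upper-tail Chernoff bound is that, because the event of interest is $\{X \le x\}$ with $x < \mu$, the exponential parameter must be taken \emph{negative}.

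First I would fix an arbitrary $t < 0$ and write $\Pr[X \le x] = \Pr[e^{tX} \ge e^{tx}]$, where the inequality in the exponent points in the stated direction precisely because $t<0$ makes $u \mapsto e^{tu}$ decreasing. Applying Markov's inequality gives $\Pr[X \le x] \le e^{-tx}\,E[e^{tX}]$. Inserting the moment generating function of a Poisson variable, $E[e^{tX}] = e^{\mu(e^t-1)}$, then yields $\Pr[X \le x] \le \exp\!\left(\mu(e^t-1) - tx\right)$, valid for every $t<0$.

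Next I would optimize the exponent over the admissible range. Differentiating $\mu(e^t-1) - tx$ with respect to $t$ and setting the derivative to zero gives $\mu e^t = x$, i.e. $e^t = x/\mu$. This is exactly where the hypothesis $x<\mu$ does its work: it forces $x/\mu < 1$, so the stationary point $t = \ln(x/\mu)$ is negative and therefore lies in the allowed region $t<0$. Substituting $e^t = x/\mu$ back into the exponent gives $\mu(x/\mu - 1) - x\ln(x/\mu) = x - \mu - x\ln x + x\ln\mu$, whence $\Pr[X \le x] \le e^{-\mu}\,e^{x}\,\mu^{x}\,x^{-x} = \frac{e^{-\mu}(e\mu)^x}{x^x}$, which is precisely the claimed inequality.

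There is no genuine obstacle in this argument; the only subtlety worth flagging is the sign of $t$. Because we are bounding a \emph{lower} tail we must take $t<0$ rather than the $t>0$ used for upper tails, and the constraint $x<\mu$ is exactly what makes the unconstrained optimizer feasible. Hence no boundary case arises, and the exponential bound is tight at the optimizing value of $t$, delivering the stated closed form directly.
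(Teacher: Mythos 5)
Your proof is correct: the paper itself gives no proof of this lemma, citing it directly from the Mitzenmacher--Upfal text, and your Chernoff-type argument (Markov's inequality applied to $e^{tX}$ with $t<0$, the Poisson moment generating function $e^{\mu(e^t-1)}$, and optimization at $e^t = x/\mu$, feasible precisely because $x<\mu$) is exactly the standard derivation used in that reference. The only pedantic caveat is the degenerate point $x=0$, where the optimizer is not attained and the bound follows by taking $t\to-\infty$; this does not affect the paper's use of the lemma, where $x = he^{-\alpha} > 0$.
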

Applying Lemma \ref{lemma:aux}, we have
\begin{equation}
\label{eq:bound_Poi}
P\{N_i \leq he^{-\alpha}\}
\leq     \frac{e^{-\lambda}(e\lambda)^\frac{h}{e^{\alpha}}}{(\frac{h}{e^{\alpha}})^\frac{h}{e^{\alpha}}}
\leq
         e^{-\frac{h}{e^\alpha} (\frac{L \Lambda}{h e^{-\alpha}} -1)} \left(\frac{L \Lambda}{h e^{-\alpha}}\right)^{\frac{h}{e^{\alpha}} }
         \leq   \frac{1}{a^\frac{h}{e^{\alpha}}},
\end{equation}
where $a \triangleq \frac{h e^{-\alpha}}{L \Lambda} \cdot e^{ \frac{L \Lambda}{h e^{-\alpha}}-1} >1$, following the fact that $e^x > 1+x$, for $\forall \ x>0$.

%
%

Armed with~\eqref{eq:Suc_case_2} and~\eqref{eq:bound_Poi}, we prove that the third property in Lemma~\ref{Le: transi} is also satisfied when $L= \Theta(h)$ by considering different value of $L$.

\textbf{(1) $h \le L$.}

By employing the sequence $\{y_i\}_{i \geq 0}$, we have
\begin{align}
\sum_{k=0}^{\infty} y_k P_{hk}
= & \sum_{k=0}^{\lfloor\frac{h}{e^ \alpha}\rfloor} y_k P_{hk}
                                     + \sum_{k=\lfloor\frac{h}{e^\alpha}\rfloor +1}^{h} y_k P_{hk}
                                     + \sum_{k=h+1}^{\infty} y_k P_{hk} \\
\leq & \sum_{k=0}^{\lfloor\frac{h}{e^ \alpha}\rfloor} \sum_{n=0}^{k}\lambda_n
                                      + y_{h+1}  
                                      + \sum_{k=\lfloor\frac{h}{e^ \alpha}\rfloor +1}^{h} y_k \sum_{n=0}^{k} \lambda_n  (1 - \frac{n+h-k}{2 L})^{(n+h-k)/2} \nonumber\\
\leq &  \frac{\lfloor\frac{h}{e^ \alpha}\rfloor +1}{a^{\lfloor\frac{h}{e^ \alpha}\rfloor}}
                                     + y_{h+1}  
                                     + \sum_{k=\lfloor\frac{h}{e^ \alpha}\rfloor+1}^{h} y_k \sum_{n=0}^{k} \lambda_n  (1 - \frac{n}{2 L})^{\frac{n}{2}} \nonumber\\
\leq &  \frac{\lfloor\frac{h}{e^ \alpha}\rfloor +1}{a^{\lfloor\frac{h}{e^ \alpha}\rfloor}}
                                     + y_{h+1}  
                                     + \sum_{k=\lfloor\frac{h}{e^ \alpha}\rfloor+1}^{h} y_k \big(\sum_{n=0}^{\lfloor\frac{h}{e^ \alpha}\rfloor} \lambda_n  (1 - \frac{n}{2 L})^{\frac{n}{2}}
                                     + \sum_{n=\lfloor\frac{h}{e^ \alpha}\rfloor+1}^{k} \lambda_n  (1 - \frac{n}{2 L})^{\frac{n}{2}} \big) \nonumber\\
\leq & \frac{\lfloor\frac{h}{e^ \alpha}\rfloor +1}{a^{\lfloor\frac{h}{e^ \alpha}\rfloor}}
                                     + \sum_{k=\lfloor\frac{h}{e^ \alpha}\rfloor+1}^{h} y_k  \big( \sum_{n=0}^{\lfloor\frac{h}{e^\alpha}\rfloor} \lambda_n
                                     + (1 - \frac{\frac{h}{e^ \alpha}}{2 L})^{\frac{h}{2 e^ \alpha}} \big)   + y_{h+1} \nonumber\\
\leq  &  \frac{\lfloor\frac{h}{e^ \alpha}\rfloor +1}{a^{\lfloor\frac{h}{e^ \alpha}\rfloor}}
                                     + \frac{h- \lfloor\frac{h}{e^ \alpha}\rfloor}{(\frac{h}{e^ \alpha} +2)^\theta}
                                       \big(\frac{1}{a^\frac{h}{e^ \alpha}}
                                     + (1 - \frac{\alpha }{2 e^ \alpha})^{\frac{h}{2 e^ \alpha}} \big)
                                     +  y_{h+1} 
\leq   \frac{1}{(h+1)^\theta}, \;\;\; \text{as} \; h \to \infty,
\label{eq:Theta_trans}
\end{align}
where the last inequality holds because
\begin{itemize}
\item
$\frac{\lfloor\frac{h}{e^ \alpha}\rfloor +1}{a^{\lfloor\frac{h}{e^ \alpha}\rfloor}}+\frac{h- \lfloor\frac{h}{e^ \alpha}\rfloor}{(\frac{h}{e^ \alpha} +2)^\theta}\big(\frac{1}{a^\frac{h}{e^ \alpha}} + (1 - \frac{\alpha }{2 e^ \alpha})^{\frac{h}{2 e^ \alpha}} \big) \sim \Theta(\frac{h}{{a_0}^{he^{-\alpha}}})$ where $a_0 = \max\left(a, \frac{1}{\sqrt{1-\frac{\alpha}{2 e^\alpha}}}\right)>1$,
\item $y_h - y_{h+1} = \frac{1}{(h+1)^\theta} (1- (1- \frac{1}{h+2})^\theta) \ge \frac{\theta}{(h+1)^\theta(h+2)}$ where we use the fact that $(1-\frac{1}{h+2})^\theta \le 1-\frac{\theta}{h+2}$ following Taylor's theorem.
\end{itemize}

\textbf{(2) $ he^{-\alpha} < L < h$.}

With the same reasoning as~\eqref{eq:Theta_trans} and noticing the fact that at most $L-1$ packets are successfully received when $L < h$, we have
\begin{align}
\sum_{k=0}^{\infty} y_k P_{hk}
=  &    \sum_{k=0}^{h-L-1} y_k P_{hk}  + \sum_{k=h-L}^{h} y_k P_{hk}  + \sum_{k=h+1}^{\infty} y_k P_{hk}
\leq    y_{h+1}  
             + \sum_{k=h-L}^{h} y_k \sum_{n=0}^{L+k-h} \lambda_n  (1 - \frac{n+h-k}{2 L})^{(n+h-k)/2} \nonumber\\
\leq &   y_{h+1}  
                + \sum_{k=h-L}^{h} y_k \sum_{n=0}^{L} \lambda_n  (1 - \frac{n}{2 L})^{\frac{n}{2}}
\leq   y_{h+1}
                + \sum_{k=h-L}^{h} y_k \big( \sum_{n=0}^{\lfloor\frac{h}{e^\alpha}\rfloor} \lambda_n + \sum_{n=\lfloor\frac{h}{e^\alpha}\rfloor +1 }^{L} (1 - \frac{n}{2 L})^{\frac{n}{2}} \big) \nonumber \\
\leq &   \frac{L}{(h-L+1)^\theta} \big(\frac{1}{a^\frac{h}{e^ \alpha}} + (1 - \frac{\alpha }{2 e^ \alpha})^{\frac{h}{2 e^
         \alpha}} \big) + y_{h+1}
\leq   \frac{1}{(h+1)^\theta}, \;\;\; \text{as} \; h \to \infty.
\label{eq:Theta_trans3}
\end{align}

\textbf{(3) $ L \le he^{-\alpha} $.}

With the same reasoning as~\eqref{eq:Theta_trans} and~\eqref{eq:Theta_trans3}, we have
\begin{align}
\sum_{k=0}^{\infty} y_k P_{hk}
= & \sum_{k=0}^{h-L-1} y_k P_{hk} + \sum_{k=h-L}^{h} y_k P_{hk} + \sum_{k=h+1}^{\infty} y_k P_{hk}
\leq    y_{h+1}  
                                      + \sum_{k=h-L}^{h} y_k \sum_{n=0}^{L+k-h} \lambda_n  (1 - \frac{n+h-k}{2 L})^{(n+h-k)/2} \nonumber\\
\leq &   y_{h+1}  
                + \sum_{k=h-L}^{h} y_k \sum_{n=0}^{L} \lambda_n
                \leq   \frac{L}{(h-L+1)^\theta} \cdot \frac{1}{a^{\frac{h}{e^\alpha}}} + y_{h+1}
\leq   \frac{1}{(h+1)^\theta}, \;\;\; \text{as} \; h \to \infty.
\end{align}

Consequently, the third property in Lemma~\ref{Le: transi} holds, which completes the proof of the transience of $(x_i)_{i \ge 0}$ for Case 2.

Next, we proceed with the proof for the third case.

\textbf{Case 3: $L=O(h)$.}

When $L=O(h)$, it is easy to see that the expected number of new arrivals per frame $\lambda = L \Lambda \gg h$. Since $N_i$ is Poisson distributed as mentioned in Case 2 above, by employing~\eqref{eq:bound_Poi}, we also have
\begin{equation}
\label{eq:bound_Poi_2}
P\{N_i \leq h\} \le \frac{1}{{a_1}^{h}},
\end{equation}
where $a_1 \triangleq \frac{h}{L \Lambda} \cdot e^{ \frac{L \Lambda}{ h}-1} \gg 1$, following the fact that $e^x > 1+x$, for $ \forall \ x>0$.

We now prove the transience of $(x_i)_{i \ge 0}$ when $L= O(h)$. To that end, by combining \eqref{eq:Seq} and \eqref{eq:bound_Poi_2}, for $h$ large enough, we have
\begin{align}
\sum_{k=0}^{\infty}  y_k P_{hk}
=  & \sum_{k=0}^{h} y_k P_{hk} +\sum_{k=h+1}^{\infty} y_k P_{hk}
\leq   \sum_{k=0}^{h} \frac{1}{(k+1)^\theta}\sum_{n=0}^{k}\lambda_n
                                     + \frac{1}{(h+2)^\theta} \nonumber\\
\leq  & \sum_{k=0}^{h} \sum_{n=0}^{k}\lambda_n + \frac{1}{(h+2)^\theta}
                  \leq \frac{h+1}{{a_1}^h} + \frac{1}{(h+2)^\theta}
                            \leq   \frac{1}{(h+1)^\theta}, \;\;\; \text{as} \; h \to \infty.
\label{eq:Trans_O}
\end{align}
where the last inequality is true following the same reasoning as~\eqref{eq:Theta_trans}.

Consequently, it follows Lemma~\ref{Le: transi} that the backlog Markov chain $(X_i)_{i \geq 0}$ is also transient when $L=O(h)$, which completes the proof of Theorem~\ref{th:NON_oh}.

\end{proof}

\section{Stability Analysis of Frame Slotted Aloha System with Multipacket Reception}
\label{sec:Sta-MPR}

In this section, we study stability properties of FSA-MPR where the receiver can correctly receive multiple packets transmitted simultaneously. This MPR capability can be achieved through the smart antenna techniques such as MIMO, or coding techniques such as CDMA. Specifically, we exploit MPR-$M$ model that only up to $M$ simultaneous packets can be decoded successfully at the receiver, which differs FSA-MPR from FSA-SPR. The value of $M$ is fixed and is known beforehand.

Note that FSA-SPR investigated previously is a special case of MPR-$M$ with $M=1$, meaning the only difference between FSA-MPR and FSA-SPR is the packet success probability. Hence, with the packet success probability of FSA-MPR, we can develop the analysis on the stability properties of FSA-MPR using similar ideas in the FSA-SPR case. Firstly, conditions for the stability of FSA-MPR are established and we further analyse the system behavior in instability region subsequently.


\subsection{Stability analysis}

We employ Lemma~\ref{lemma:pakes} and Lemma~\ref{lemma:kaplan} as mathematical tools to study the stability properties of FSA-MPR, more specifically, in the proof of Theorem~\ref{theorem:stable-MPR}.
\begin{proof}[Proof of Theorem~\ref{theorem:stable-MPR}]
When $h$ is large, with Poisson approximation, \eqref{eq:Suc_MPR} can be approximated as follows:
\begin{align}
\label{eq:E_sup_MPR}
r_h  = L \sum_{x=1}^{M} e^{-\alpha} \frac{{\alpha ^x}}{(x-1)!}  = L \alpha \sum_{x=0}^{M-1} e^{-\alpha} \frac{{\alpha ^x}}{x!}  \leq L \alpha
\end{align}
where $\alpha \triangleq \frac{h}{L}$. And we define $\Phi (\alpha) \triangleq \sum_{x=1}^{M} e^{-\alpha} \frac{{\alpha ^x}}{(x-1)!}$ in the rest of this paper.

We then develop our proof in three steps.

\textbf{Step 1: $L = \Theta(h)$ and $\Lambda < \sum_{x=1}^{M} e^{-\alpha} \frac{{\alpha ^x}}{(x-1)!}$.}

According to~\eqref{eq:finite} and~\eqref{eq:E_sup_MPR}, the expected drift at state $h$ of $(x_i)_{i\ge 0}$ in FSA-MPR is finite as shown in the following inequality:
\begin{eqnarray*}
|D_h| \leq \lambda + |r_h| \leq \lambda + L \alpha,
\end{eqnarray*}
which demonstrates the second conditions in Lemma~\ref{lemma:pakes} for the ergodicity of $(x_i)_{i\ge 0}$.

Furthermore, when $L= \Theta(h)$, we have
\begin{align*}
\lim_{h \to \infty} D_h  = L (\Lambda - \sum_{x=1}^{M} e^{-\alpha} \frac{{\alpha ^x}}{(x-1)!}) < 0,
\end{align*}
if $\Lambda < \sum_{x=1}^{M} e^{-\alpha} \frac{{\alpha ^x}}{(x-1)!}$.

It then follows from Lemma~\ref{lemma:pakes} that $(x_i)_{i\ge 0}$ is ergodic if $\Lambda < \Phi(\alpha)$ when $L= \Theta(h)$.

\textbf{Step 2: stability region maximiser $\alpha^*$.}

In this step, we show that $\alpha^* =\Theta(M)$. Since the proof consists mainly of algebraic operations of function optimization, we state the following lemma proving Step 2 and detail its proof in Appendix~\ref{Appendix:B}.
\begin{lemma}
\label{Le:alpha_M}
Let $\alpha^*$ denote the value of $\alpha$ that maximises the stability region, it holds that $\alpha^* =\Theta(M)$
\end{lemma}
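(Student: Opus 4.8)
The plan is to reduce the optimization of the stability region to maximizing, over $\alpha>0$, the single threshold function $\Phi(\alpha)=\sum_{x=1}^{M}e^{-\alpha}\frac{\alpha^x}{(x-1)!}$, since by Step~1 the FSA-MPR system is stable iff $\Lambda<\Phi(\alpha)$ (for $L=\Theta(h)$), so a larger threshold means a larger stability region. Setting $y=x-1$ rewrites this as $\Phi(\alpha)=\alpha e^{-\alpha}S(\alpha)$ with $S(\alpha)\triangleq\sum_{y=0}^{M-1}\frac{\alpha^y}{y!}$, the truncated exponential series. First I would differentiate: using $S'(\alpha)=S(\alpha)-\frac{\alpha^{M-1}}{(M-1)!}$, a short calculation collapses the derivative to $\Phi'(\alpha)=e^{-\alpha}\bigl(S(\alpha)-\frac{\alpha^M}{(M-1)!}\bigr)$, so the sign of $\Phi'$ equals the sign of $g(\alpha)\triangleq S(\alpha)-\frac{\alpha^M}{(M-1)!}$. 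Since $\Phi(0)=0$, $\Phi(\alpha)\sim\frac{\alpha^M e^{-\alpha}}{(M-1)!}\to0$ as $\alpha\to\infty$, and $\Phi>0$ on $(0,\infty)$, the maximizer $\alpha^*$ is an interior critical point and hence a root of $g$; localizing the roots of $g$ therefore localizes $\alpha^*$.

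To localize the roots I would rewrite $g$ in terms of the top term $T_{M-1}=\frac{\alpha^{M-1}}{(M-1)!}$ of $S$, noting $g(\alpha)<0\iff S(\alpha)<\alpha T_{M-1}\iff\frac{S(\alpha)}{T_{M-1}}<\alpha$, together with the identity $\frac{S(\alpha)}{T_{M-1}}=\sum_{j=0}^{M-1}\prod_{i=1}^{j}\frac{M-i}{\alpha}$. Both bounds then follow from elementary estimates of this sum. For the \emph{upper bound}, take $\alpha\ge 2M$: every factor obeys $\frac{M-i}{\alpha}\le\frac12$, so $\frac{S(\alpha)}{T_{M-1}}\le\sum_{j\ge0}(1/2)^j=2=O(1)<\alpha$, whence $g(\alpha)<0$ on $[2M,\infty)$. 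For the \emph{lower bound}, take $\alpha\le M/2$, so that the terms $\frac{\alpha^y}{y!}$ of $S$ peak near $y\approx\alpha<M-1$; bounding $\frac{S(\alpha)}{T_{M-1}}$ below by the single peak-to-top ratio $\frac{T_{\lfloor\alpha\rfloor}}{T_{M-1}}=\prod_{y=\lfloor\alpha\rfloor+1}^{M-1}\frac{y}{\alpha}=\exp\!\bigl(\Theta(M)\bigr)$ shows $\frac{S(\alpha)}{T_{M-1}}$ dwarfs $\alpha$, whence $g(\alpha)>0$ on $(0,M/2]$. Combining the two, every root of $g$, and in particular $\alpha^*$, lies in $(M/2,2M)$, which is precisely $\alpha^*=\Theta(M)$.

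I expect the \textbf{lower-bound estimate} to be the main obstacle, since it requires certifying that $S(\alpha)$ exceeds its last term by an exponential factor \emph{uniformly} over $\alpha\in(0,M/2]$; concretely this means estimating $\sum_{y=\lfloor\alpha\rfloor+1}^{M-1}\log\frac{y}{\alpha}$ by a log-sum-to-integral (or Stirling) argument of the form $\int_{\alpha}^{M}\log\frac{t}{\alpha}\,dt=M\log\frac{M}{\alpha}-M+\alpha$, then checking this quantity is minimized at $\alpha=M/2$ and stays a positive multiple of $M$ there, so the growth is genuinely $\exp(\Theta(M))$ across the whole interval. The upper bound, by contrast, is a clean geometric-series domination. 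As a cross-check and source of intuition I would record the probabilistic reading $\Phi(\alpha)=E\bigl[X\,\mathbf 1\{X\le M\}\bigr]$ for $X\sim\mathrm{Poisson}(\alpha)$, under which $\Phi'(\alpha)=e^{-\alpha}\bigl(\Pr[X\le M-1]-\alpha\Pr[X=M-1]\bigr)$; this makes transparent why the throughput-optimal load sits at scale $M$ (the per-slot occupancy should be tuned just below the reception capacity $M$), and, if a unique maximizer is desired, one can verify that $R(\alpha)=\frac{\alpha^M}{(M-1)!\,S(\alpha)}$ is increasing near the crossing so that $g$ changes sign exactly once.
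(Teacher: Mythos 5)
Your proposal is correct and follows essentially the same route as the paper's proof: both compute $\Phi'(\alpha)=e^{-\alpha}\bigl(\sum_{i=0}^{M-1}\frac{\alpha^i}{i!}-\frac{\alpha^M}{(M-1)!}\bigr)$ and then certify that this expression is positive for $\alpha$ below a constant multiple of $M$ and negative above one, which pins the maximiser inside a window of the form $[c_1 M, c_2 M]$. The only differences are in the elementary estimates (the paper applies $N!\le N^{N-1}$ and $N!\ge (N/e)^N$ term by term to get the interval $[\frac{M-1}{e},\,M]$, whereas you normalise by the top term and use a geometric-series bound and a peak-to-top ratio to get $(M/2,\,2M)$), which is immaterial for the $\Theta(M)$ conclusion.
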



\textbf{Step 3: $L=o(h)$ or $L=O(h)$ or $L = \Theta(h)$ and $\Lambda > \sum_{x=1}^{M} e^{-\alpha} \frac{{\alpha ^x}}{(x-1)!}$.}

In this step, we prove the instability of $(x_i)_{i\ge 0}$ by applying Lemma~\ref{lemma:kaplan}.

According to the analysis on the impact of different relation between $L$ and $h$ in the the second step proof of Theorem~\ref{theorem:stable}, we know that $\lim_{h \to \infty} D_h > 0$, if the conditions in the second part of Theorem \ref{theorem:stable-MPR} are satisfied.

In addition, similar to \eqref{eq:down_dr}, we can prove that the lower-bounded downward part of expected drift in FSA-MPR as follows:
\begin{equation*}
d_{h^-} \geq -M \min(h,L),
\end{equation*}
which proves the instability of FSA-MPR following Lemma \ref{lemma:kaplan}, which completes the proof of Theorem \ref{theorem:stable-MPR}.
\end{proof}

\subsection{System behavior in instability region}

It follows from Theorem~\ref{theorem:stable-MPR} that the system is not stable in the following three conditions: $L = o(h)$; $L = O(h)$; $L = \Theta(h)$ but $\Lambda > \sum_{x=1}^{M} e^{-\alpha} \frac{{\alpha ^x}}{(x-1)!}$. In this subsection, we further investigate the system behavior in the instability region, i.e., when $(X_i)_{i \geq 0}$ is nonergodic. The key results are given in Theorem~\ref{th:NON_oh_OH-MPR}, whose proof is detailed as follows.

\begin{proof}[Proof of Theorem~\ref{th:NON_oh_OH-MPR}]
We prove the theorem in three cases as the proof of Theorem~\ref{th:NON_oh}.

%

\textbf{Case 1: $L=o(h^{1- \epsilon})$.}

Recall Lemma~\ref{Le:sup_oh}, we can demonstrate the transience of $(x_i)_{i \ge 0}$ by proving $\lim_{k \to \infty}k^2 \xi_{hk} = 0$ in FSA-MPR. To this end, we derive $\xi_{hk}$.

In~\eqref{eq_rho}, we have shown that $\rho_{_j}$ is bounded when $L = o(h^{1- \epsilon})$ for $j=1, 2, \cdots, M$, so the probability of finding exactly $m_j$ time-slots with $j$ packets in FSA-MPR can also be approximated by the Poisson distribution with the parameter $\rho_j$, following from Lemma~\ref{Lem:poisson} .

Consequently, we can derive the probability that all $h$ packets fail to be received, i.e., there is not any slot with $1 \le j \le M$ packets, as follows:
\begin{equation}
\xi_{h0}  =   e^{- (\rho_1+\rho_2+\cdots+\rho_{_M}) }   .
\end{equation}

Further, we can get the following inequalities:
\begin{align*}
\xi_{hk}  \leq 1-e^{- {(\rho_1+\rho_2+\cdots+\rho_{_M})}}  \leq 1-e^{- M \rho_{_M}}, \  1 \le k < L,
\end{align*}
where we use the fact that the probability of exact $k \geq 1$ successfully received packets among $h$ packets is less than that of at least one packet received successfully in the first inequality. And the second inequality above follows from the fact that when $L=o(h^{1- \epsilon})$, it holds that
$$ \rho_{_M} > \rho_{_{M-1}} > \cdots > \rho_2 > \rho_1.$$

As a result, we have
\begin{align*}
0 \leq \lim_{k \to \infty}  k^2 \sup_{h \geq k} \xi_{hk}
&    =  \lim_{k \to \infty} k^2 (1-e^{- M \rho_M})
 \leq  \lim_{h \to \infty} h^2 (1-e^{- M \rho_M})
 \leq  \lim_{h \to \infty} \frac{e^{ M \rho_{M} } - 1}{(1/h^2)} \\
& \leq  \lim_{h \to \infty} \frac{(M+3)(M+2)(M+1)M L^3}{2 e^{h/L}}
  \leq 0,
\end{align*}
where we use L'Hospital's rule and the fact that $\rho_{_M} \to 0$ at an exponential rate as $h \to \infty$.

Thus, according to Lemma \ref{Le:sup_oh}, the backlog Markov chain $(x_i)_{i \ge 0}$ is transient when $L=o(h^{1- \epsilon})$.

\textbf{Case 2: $L=\Theta(h)$ and $ \Lambda > \alpha$.}

In this case, we have $\lambda > L \Lambda >h$.
Using similar reasoning as~\eqref{eq:Trans_O}, we have :
\begin{align*}
\sum_{k=0}^{\infty}  y_k P_{hk}
                            \leq   \frac{h+1}{{a_2}^h} + \frac{1}{(h+2)^\theta}
                            \leq   \frac{1}{(h+1)^\theta}, \ \text{as} \ h \to \infty ,
\end{align*}
where $a_2 = \frac{\alpha}{\Lambda} e^{\frac{\Lambda}{\alpha}-1} >1$.

Thus when $L=\Theta(h)$ and $ \Lambda > \alpha$, $(x_i)_{i \ge 0}$ is transient.

\textbf{Case 3: $L=O(h)$.}

In the analysis of FSA-SPR system, we have proven that when $L = O (h)$, the Markov chain $(x_i)_{i \ge 0}$ is always transient, independent of $\xi_{hk}$. Noticing that $\xi_{hk}$ is the only difference between FSA-SPR and FSA-MPR, it holds that $(x_i)_{i \ge 0}$ is also transient under FSA-MPR.
\end{proof}

\section{Conclusion}
\label{sec:conclusion}
In this paper, we have studied the stability of FSA-SPR and FSA-MPR by modeling the system backlog as a Markov chain. By employing drift analysis, we have obtained the closed-form conditions for the stability of FSA and shown that the stability region is maximised when the frame length equals the backlog size in the single packet reception model and when the ratio of the backlog size to frame length equals in order of magnitude the maximum multipacket reception capacity in the multipacket reception model. Furthermore, to characterise system behavior in the instable region, we have mathematically demonstrated the existence of transience of the Markov chain. Our results provide theoretical guidelines on the design of stable FSA-based protocols in practical applications such as RFID and M2M systems.

\bibliographystyle{abbrv}
\bibliography{stability_frame_SPR_MPR_sc}

\appendices
\section{Proof of Lemma~\ref{Le:sup_oh}}
\label{Appendix:A}
\begin{proof}
Recall the definition of transition probability and we can get the following equivalent:
\begin{align*}
\sum_{k} {y_k P_{hk}} \leq y_h
\Longleftrightarrow  \sum_{k=1}^{\min(h,L)} (y_{h-k} - y_h) P_{h,h-k}
                    + \sum_{k=1}^{\infty} (y_{h+k} - y_h) P_{h,h+k}  \leq 0.
\end{align*}


Define $f'(h)$ and $g'(h)$ as follows:
\begin{align}
\label{eq:g'f'}
\begin{cases}
f'(h) = (h+1)^\theta & \sum_{k=1}^{\min(h,L)}  (\frac{1}{h+1-k} - \frac{1}{h+1})  \\
        &\cdot \sum_{n=0}^{\min(h,L)-k} {\lambda_n \xi_{h,n+k}} , \\
g'(h) = (h+1)^\theta & \sum_{k=1}^{\infty}  (\frac{1}{h+1+k} - \frac{1}{h+1}) \\
        &\cdot \sum_{n=0}^{\min(h,L)} {\lambda_{n+k} \xi_{h,n}}.
\end{cases}
\end{align}

We have
\begin{equation*}
\sum_{k} {y_k P_{hk}} \leq y_h \Longleftrightarrow   f'(h) + g'(h) \leq 0.
\end{equation*}

Furthermore, the expected drift of $X_i$ at state $h$ can be also computed from the one-step transition probabilities~\eqref{transition Pro},
\begin{equation*}
D_h = - \sum_{k=1}^{\min(h,L)} k P_{h,h-k} + \sum_{k=1}^{\infty} k P_{h,h+k} = f(h) + g(h),
\end{equation*}
where $f(h)$ and $g(h)$ are defined as follows:
\begin{align}
\label{eq:gf}
\begin{cases}
f(h) = \sum_{k=1}^{\min(h,L)} k \sum_{n=0}^{\min(h,L)-k} {\lambda_n \xi_{h,n+k}} ,\\
g(h) = \sum_{k=1}^{\infty}  k \sum_{n=0}^{\min(h,L)} {\lambda_{n+k} \xi_{h,n}}  .
\end{cases}
\end{align}

It is noted that in the proof of Theorem~\ref{theorem:stable}, we have show that $\lim_{h \to \infty} D_h$ $ \geq 0 $ in the nonergodicity region. Consequently, it then holds that $(X_i)_{i \geq 0}$ is transient if the following equation holds:
\begin{equation}
\lim_{h \to \infty} [f'(h) + g'(h) + \theta D_h] = 0.
\label{eq:aux3}
\end{equation}

Noticing that $D_h=f(h)+g(h)$, we prove \eqref{eq:aux3} by showing that: (1) $\lim_{h \to \infty} [f'(h)+ \theta f(h)]=0$; (2) $ \lim_{h \to \infty} [g'(h)+ \theta g(h)]=0$.

\textbf{We first prove $\lim_{h \to \infty} [f'(h)+ \theta f(h)]=0$.}

From \eqref{eq:g'f'} and \eqref{eq:gf}, we get
\begin{align}
f'(h)+ \theta f(h) =  (h+1)  \sum_{k=1}^{\min(h,L)} \Big[(\frac{h+1}{h+1-k})^\theta -1
                       -\frac{\theta k}{h+1}\Big] \cdot \sum_{n=0}^{\min(h,L)-k} {\lambda_n \xi_{h,n+k}},
\end{align}
which is nonnegative since
$$ (\frac{h+1}{h+1-k})^\theta -1  \nonumber  -\frac{\theta k}{h+1} > 0, \ \forall 1 \leq k \leq h.$$
Define $\tau_k = \sup_{h \geq k} \xi_{hk}$, we have
\begin{align*}
0
\leq
       f'(h)+ \theta f(h)
&\leq   (h+1)    \sum_{k=1}^{\min(h,L)}  \Big[(\frac{h+1}{h+1-k})^\theta -1 -\frac{\theta k}{h+1}\Big]
                   \sum_{n=0}^{\min(h,L)-k} {\lambda_n \tau_{n+k}} \\ 
& \leq  (h+1)    \sum_{n=0}^{\infty} \lambda_n \sum_{k=1}^{h} \Big[(\frac{h+1}{h+1-k})^\theta -1  -\frac{\theta k}{h+1}\Big]  \tau_{n+k},
\end{align*}
which can also be written as
\begin{equation}
f'(h)+ \theta f(h) \leq
                        z_{1}(h) + z_{2}(h)
\end{equation}
where $z_{1}(h)$ and $z_{2}(h)$ are defined as follows:
\begin{align*}
\begin{cases}
\displaystyle z_{1}(h)=
           (h+1) \sum_{n=0}^{\infty} \sum_{k=1}^{\lfloor \frac{h+1}{2} \rfloor} \Big[(\frac{h+1}{h+1-k})^\theta -1  -\frac{\theta k}{h+1}\Big] \cdot \lambda_n{ \tau_{n+k}},   \\
\displaystyle  z_{2}(h)=
           (h+1) \sum_{n=0}^{\infty} \sum_{k= \lfloor \frac{h+3}{2} \rfloor}^{h} \Big[(\frac{h+1}{h+1-k})^\theta -1 -\frac{\theta k}{h+1}\Big] \cdot \lambda_n { \tau_{n+k}}.
\end{cases}
\end{align*}

Next we show that $z_{1}(h)$ and $z_{2}(h)$ go to zero independently. Given $0< x \leq h$, define $m_h(x)$ as follows:
\begin{equation*}
m_{h} (x) = \frac{h+1}{x^2} \Big[(\frac{h+1}{h+1-x})^\theta -1 \Big] -\frac{\theta }{x},
\end{equation*}
which is positive and nondecreasing in $x$ for $h \geq 1$. We also have
\begin{equation*}
m_{h} \left(\lfloor \frac{h+1}{2} \rfloor \right) \leq  \frac{1}{h+1} [4(2^\theta - 1) - 2 \theta] = \frac{A}{h+1},
\end{equation*}
where $A$ is a positive constant only depending on $\theta$. Thus,
\begin{align*}
z_{1} (h)
=    \sum_{n=0}^{\infty} \lambda_n \sum_{k=1}^{\lfloor \frac{h+1}{2} \rfloor} k^2  m_h (k)  { \tau_{n+k}}
\leq  \frac{A}{h+1} \sum_{n=0}^{\infty} \lambda_n  \sum_{k=1}^{\lfloor \frac{h+1}{2} \rfloor} k^2   {
        \tau_{n+k}}
\leq  \frac{A}{h+1} \sum_{n=0}^{\infty} \lambda_n  \sum_{k=1}^{n+ \lfloor \frac{h+1}{2} \rfloor} k^2  { \tau_{n+k}}.
\end{align*}

If the condition $\lim_{k \to \infty} k^2 \sup_{h \geq k} \xi_{hk} = 0$ holds, i.e., $lim_{k \to \infty} k^2 \tau_{k} = 0$, then it holds that $lim_{n \to \infty} \frac{1}{n} \sum_{k=1}^{n} k^2 \tau_{k} = 0$. Consequently, for $\forall \ \epsilon > 0$, we can choose $h$ large enough such that $\sum_{k=1}^{n} k^2 \tau_{k} < n \epsilon$ for $n \geq \lfloor \frac{h+1}{2} \rfloor$. It holds that
\begin{align*}
z_{1} (h) \leq  \epsilon \frac{A}{h+1} \sum_{n=0}^{\infty} \lambda_n  (n + \frac{h+1}{2})  = \epsilon A \big(\frac{\lambda}{h+1} + \frac{1}{2} \big)\rightarrow 0.
\end{align*}

On the other hand, if $h$ is large enough such that for $\forall \ k \geq \lfloor (h+3)/{2} \rfloor$, it holds that $\tau_k < \epsilon / {k^2}$, we have
\begin{align*}
z_2 (h)
\leq &
      \epsilon \sum_{n=0}^{\infty} \lambda_n \sum_{k=\lfloor \frac{h+3}{2} \rfloor}^{h}
       \Big[ (\frac{h+1}{h+1-k})^\theta -1  -\frac{\theta k}{h+1}\Big]  \frac{h+1}{(n+k)^2}
\leq
      \epsilon \sum_{n=0}^{\infty} \lambda_n \sum_{k=\lfloor \frac{h+3}{2} \rfloor}^{h} \Big[ (\frac{h+1}{h+1-k})^\theta -1  -\frac{\theta k}{h+1}\Big] \frac{h+1}{(\lfloor \frac{h+3}{2} \rfloor)^2} \\
\leq &
       \frac{4 \epsilon}{h+1} \sum_{k=\lfloor \frac{h+3}{2} \rfloor}^{h} \Big[ (\frac{h+1}{h+1-k})^\theta -1 - \frac{\theta k}{h+1}\Big]
\leq
       \frac{2 \theta (1+\theta) \epsilon}{h+1} \sum_{k=\lfloor \frac{h+3}{2} \rfloor}^{h} (\frac{k}{h+1})^2 \leq  \theta (1+\theta) \epsilon,
\end{align*}
where we use the following inequalities
\begin{equation}
0
\leq
     \frac{1}{(1+x)^ \theta} -1 + \theta x \leq \theta (1+ \theta) \frac{x^2}{2}, \;\;\;\;(x \geq 0, 0 < \theta <1).
\label{eq:sup_k_theta}
\end{equation}

Therefore, it holds that $ \lim_{h \to \infty} [f'(h)+ \theta f(h)]=0$.

\textbf{We then prove $ \lim_{h \to \infty} [g'(h)+ \theta g(h)]=0$.}

From \eqref{eq:g'f'} and \eqref{eq:gf}, we get
\begin{equation*}
g'(h)+ \theta g(h)
=  (h+1)  \sum_{k=1}^{\infty} \Big[(\frac{h+1}{h+1+k})^\theta -1   +\frac{\theta k}{h+1}\Big]
   \cdot \sum_{n=0}^{\min(h,L)} {\lambda_{n+k} \xi_{h,n}}.
\end{equation*}
Since $\Big[(\frac{h+1}{h+1+k})^\theta -1  \nonumber +\frac{\theta k}{h+1}\Big] \geq 0$, after some algebraic operations, we have
\begin{equation*}
g'(h)+ \theta g(h)
\leq   (h+1)\sum_{n=1}^{\infty} \lambda_{n} \sum_{k=1}^{n} \Big[(\frac{h+1}{h+1+k})^\theta -1
      +\frac{\theta k}{h+1}\Big] \cdot  {\xi_{h,n-k}}
\end{equation*}
Using the inequalities~\eqref{eq:sup_k_theta}, we have
\begin{align*}
0
\leq
       g'(h)+ \theta g(h)
\leq &   \theta \frac{(\theta+1)}{2} (h+1)\sum_{n=1}^{N} \lambda_{n} \sum_{k=1}^{n}\frac{k^2}{(h+1)^2} {\xi_{h,n-k}}
          + \theta (h+1) \sum_{n=N+1}^{\infty} \lambda_{n} \sum_{k=1}^{n} \frac{k}{h+1} {\xi_{h,n-k}}  \\
\leq &  \frac{1}{h+1} \sum_{n=1}^{N} n^2 \lambda_{n}  + \sum_{n=N+1}^{\infty} n \lambda_{n}.
\end{align*}
For $\forall \ \epsilon >0 $, we choose $N$ such that $\sum_{n=N+1}^{\infty} n \lambda_{n} < \epsilon/2$; then choose $h$ large enough so that $\frac{1}{h+1} \sum_{n=1}^{N} n^2 \lambda_{n} < \epsilon/2$; we have $\lim_{h \to \infty} [g'(h)+ \theta g(h)]=0$, which completes the proof of the second part and also Lemma~\ref{Le:sup_oh}.
\end{proof}

\section{Proof of Lemma~\ref{Le:alpha_M}}
\label{Appendix:B}
\begin{proof}
We write $ \Phi (\alpha)$ as
\begin{align*}
\Phi(\alpha) = e^{-\alpha} \sum_{i=1}^M \frac{\alpha^i}{(i-1)!},
\end{align*}
whose derivative can be calculated as:
\begin{align*}
\Phi'(\alpha) = e^{-\alpha}  \left[\sum_{i=0}^{M-1}\frac{\alpha^i}{i!} - \frac{\alpha ^{M}}{(M-1)!} \right].
\end{align*}

We distinguish two cases.

\begin{itemize}
\item \textit{Case 1: $ \alpha \geq M $}. Note that $ N! \leq N^{N-1}$ $\forall N\in \mathbb{N}$, we have
\begin{equation*}
\Phi'(\alpha)
<   \frac{e^{-\alpha}}{(M-1)!} \cdot  \left( \sum_{i=1}^M M^{M-i}\alpha^{i-1}-\alpha^M\right)
<   \frac{e^{-\alpha}}{(M-1)!} \cdot \big( M \alpha ^{M-1} - {\alpha ^{M}}  \big)
<    0,
\end{equation*}
meaning that $\Phi(\alpha)$ is monotonously decreasing when $ \alpha \geq M $.
\item \textit{Case 2: $ \alpha \leq \frac{M-1}{e}$}.
Using the inequality $N! \geq ({\frac{N}{e}})^{N}$, we have
\begin{multline*}
\Phi'(\alpha)
\geq     \frac{e^{-\alpha}}{(M-1)!} \cdot  \left[ (\frac{M-1}{e})^{M-1} + \alpha(\frac{M-1}{e})^{M-1}
           -{\alpha ^{M}}\right]  \geq    \frac{e^{-\alpha}}{(M-1)!} \cdot \big( \alpha ^{M-1} + \alpha ^{M}- {\alpha ^{M}}  \big)
>  0 ,
\end{multline*}
meaning that $\Phi(\alpha)$ is monotonously increasing when $ \alpha \leq \frac{M-1}{e}$.
\end{itemize}

Combining the analysis in both cases, we have proved that $\alpha^*$ maximising $\Phi(\alpha)$ falls into the interval $[\frac{M-1}{e}, M]$, i.e., $\alpha^* = \Theta (M)$.
\end{proof}

\end{document}